\documentclass{article}

\usepackage{PRIMEarxiv}

\usepackage[utf8]{inputenc} 
\usepackage[T1]{fontenc}    
\usepackage{hyperref}       
\usepackage{url}            
\usepackage{booktabs}       
\usepackage{amsfonts}       
\usepackage{amsmath,amssymb}
\usepackage{mathtools}
\usepackage{nicefrac}       
\usepackage{microtype}      
\usepackage{lipsum}
\usepackage{fancyhdr}       
\usepackage{graphicx}       

\usepackage{times}
\usepackage{subfig}
\usepackage{xcolor}
\usepackage{array}
\usepackage{cite}
\usepackage{multirow}
\usepackage{adjustbox}
\usepackage[scr=dutchcal]{mathalfa}
\usepackage[font=small,labelfont=bf]{caption}
\usepackage{enumitem}
\usepackage{tikz}
\usetikzlibrary{arrows,matrix} 

\DeclareMathOperator{\diag}{diag}

\DeclareMathOperator{\rank}{rank}

\newtheorem{theorem}{Theorem}

\newtheorem{proposition}{Proposition}
\newtheorem{corollary}{Corollary}

\newtheorem{definition}{Definition}

\newtheorem{remark}{Remark}

\newtheorem{proof}{Proof}

\usepackage{tikz-network}
\tikzset{
    block/.style = {draw, fill=white, rectangle, minimum height=.75cm, minimum width=1.25cm},
    Circle/.style = {draw, thick, fill=white, circle, minimum height=.75cm, minimum width=1.25cm},
    tmp/.style  = {coordinate}, 
    sum/.style= {draw=white, fill=white, circle, minimum height=.75cm, minimum width=1.25cm},
    input/.style = {coordinate},
    output/.style= {coordinate},
    pinstyle/.style = {pin edge={to-,thin,black}
    }
}

\title{
Cooperative $\mathcal{H}_\infty$ Fault-Tolerant Tracking with \\ ISS Guarantees for Networked Systems \\
with Sensor Faults
}

\author{
  Moh Kamalul Wafi, Bambang L. Widjiantoro, Katherin Indriawati\\
  Department of Engineering Physics \\
  Institut Teknologi Sepuluh Nopember (ITS), Surabaya, Indonesia\\
  \texttt{\{kamalul.wafi\}@its.ac.id} \\
   \And
  Yurid E. Nugraha \\
  Department of Electrical Engineering \\
  Institut Teknologi Sepuluh Nopember (ITS), Surabaya, Indonesia\\
}

\begin{document}
\maketitle

\begin{abstract}
This paper develops a cooperative fault-tolerant tracking framework for heterogeneous networked linear systems subject to sensor faults and external disturbances. Each unit employs an augmented $\mathcal{H}_\infty$ observer that jointly reconstructs the system state and unknown sensor fault, providing disturbance-attenuated estimation guarantees. An inner state-feedback gain is synthesized through convex $\mathcal{H}_\infty$ Linear Matrix Inequalities (LMIs) to ensure robust closed-loop stabilization and disturbance rejection, while an outer distributed integral action eliminates steady-state tracking offsets and enables cooperative tracking of a setpoint source. 
The resulting cooperative error dynamics are shown to satisfy an Input-to-State Stability (ISS) property with respect to disturbances and residual estimation uncertainty, and converge exponentially to zero in the disturbance-free case. Furthermore, vanishing cooperative error guarantees network-wide consensus tracking of the desired setpoint. Numerical studies on heterogeneous DC-motor networks with star, cyclic, and path communication topologies demonstrate accurate state and fault estimation, robust cooperative tracking, and resilience against disturbances and time-varying sensor faults. The proposed framework provides a scalable and robust coordination strategy for interconnected systems operating under sensing imperfections and uncertain environments.
\end{abstract}
\allowdisplaybreaks

\keywords{Cooperative Control \and Fault-Tolerant Tracking \and $\mathcal{H}_\infty$ Control \and Input-to-State Stability}

\section{Introduction}

Fault--tolerant control (FTC) has been extensively investigated for safety--critical single-agent systems, including industrial processes, aerospace vehicles, and autonomous platforms \cite{R1,R2,R3}. As modern infrastructures increasingly rely on interconnected subsystems, distributed architectures have become essential due to their scalability, robustness to individual failures, and reduced communication requirements \cite{R4}. A key challenge in such systems is maintaining reliable state estimation under limited and potentially corrupted measurements. Consequently, distributed estimation has received significant attention for both linear and nonlinear multi-agent systems \cite{Wafi-AIP,R5,Wafi-DistEstTank,Wafi-Elham}.

In networked environments, sensor faults are particularly disruptive because corrupted measurements propagate through cooperative control loops and may destabilize the entire network even if only a single unit is affected. This has motivated the development of distributed FTC schemes that address sensor faults via observer-based compensation \cite{R8,R9}, actuator faults via resilient control laws \cite{R10,R11}, and combined fault scenarios through adaptive or reconfigurable methods \cite{R12,R13}. Although these contributions demonstrate meaningful progress, most existing methods assume nominal or disturbance-free plant dynamics. A second limitation is that current distributed FTC approaches rarely establish Input--to--State Stability (ISS) guarantees for the cooperative tracking error. ISS guarantees are crucial for quantifying the robustness of cooperative tracking against persistent disturbances, modeling uncertainties, and residual estimation errors. Such effects inevitably arise in large-scale interconnected systems and can significantly degrade network performance if not properly addressed.

Motivated by these gaps, this paper develops a cooperative FTC architecture that combines augmented fault estimation, robust distributed control, and distributed integral feedback to achieve reliable reference tracking across heterogeneous agents. In contrast to many existing approaches, the proposed framework provides $\mathcal{H}_\infty$--based robustness conditions formulated as convex Linear Matrix Inequalities (LMIs) \cite{R14}, together with a network-level ISS guarantee with respect to disturbances and estimation imperfections. Furthermore, the proposed strategy ensures exact consensus tracking when faults and disturbances vanish. These properties make the framework suitable for applications such as UAV formations, vehicle platoons, distributed sensor networks with drifting biases, and industrial robotic teams.

The main contributions of this paper are:
\begin{enumerate}[leftmargin=*]
    \item \textbf{Augmented observer design.}
    We develop an augmented observer that jointly estimates the system state and unknown sensor faults for each networked unit. The $\mathcal{H}_\infty$ condition of Theorem~\ref{thm:observer-Hinf} guarantees robust estimation performance in the presence of disturbances and sensor degradation.

    \item \textbf{Robust cooperative fault-tolerant control architecture.}
    Using the estimated states, we construct a cooperative fault-tolerant control strategy composed of an inner $\mathcal{H}_\infty$ state-feedback loop and an outer distributed integral tracking loop. The resulting controller stabilizes the interconnected network while attenuating disturbances and compensating steady-state tracking offsets.

    \item \textbf{$\mathcal{H}_\infty$ synthesis via convex LMIs.}
    The observer and controller gains are synthesized through convex Linear Matrix Inequality (LMI) conditions derived in Theorems~\ref{thm:observer-Hinf} and~\ref{thm:network-Hinf}, enabling systematic robust design for heterogeneous networked systems.

    \item \textbf{ISS-based cooperative tracking analysis.}
    Through Theorem~\ref{thm:ISS-error} together with Proposition~\ref{prop:e-zero-implies-consensus}, we establish Input-to-State Stability (ISS) of the cooperative tracking error with respect to disturbances and residual estimation uncertainty. In the disturbance-free case, the cooperative tracking error converges exponentially to zero.

    \item \textbf{Validation under heterogeneous topologies and time-varying sensor faults.}
    Numerical studies on heterogeneous multi-agent networks with star, cyclic, and path communication topologies demonstrate accurate fault estimation, robust cooperative tracking, and resilience against disturbances and time-varying sensor faults.
\end{enumerate}

\noindent\textbf{Notation.}
For a positive integer $p$, $\mathbf{1}_p \in \mathbb{R}^p$ and $\mathbf{0}_p \in \mathbb{R}^p$ denote the vectors of all ones and all zeros, respectively, while $I_p$ denotes the $p\times p$ identity matrix. Moreover, $0_{p\times n}$ denotes the zero matrix of size $p\times n$.
For a matrix $M$, $M^\top$ denotes its transpose, and $\diag\{\cdot\}$ denotes a block diagonal matrix constructed from its arguments. 
The Kronecker product is denoted by $\otimes$. For matrices $A \in \mathbb{R}^{n\times m}$ and $B \in \mathbb{R}^{p\times q}$, the Kronecker product $A \otimes B \in \mathbb{R}^{np\times mq}$ is defined in the standard way. 

For a collection of vectors $x_i \in \mathbb{R}^n$, $i=1,\dots,m$, we define the stacked vector $\bar{x} = [x_1^\top,\dots,x_m^\top]^\top \in \mathbb{R}^{mn}$. 
Throughout the paper, $\|\cdot\|$ denotes the Euclidean norm, and $\mathcal{L}_2[0,\infty)$ denotes the space of square-integrable signals.

\section{Communication Network}\label{sec:ComNetwork}

Information transfer among the agents is described by a weighted directed graph (digraph) 
$\mathcal{G}=(\mathcal{V},\mathcal{E},\mathcal{W})$, where $\mathcal{V}$ is the set of nodes (agents), 
$\mathcal{E} \subseteq \mathcal{V}\times\mathcal{V}$ is the set of directed edges representing communication links, 
and $\mathcal{W} = [w_{ij}]$ is the nonnegative weight matrix, where $w_{ij}$ quantifies the information received by agent $i$ from agent $j$. 
In particular, $w_{ij} > 0$ if and only if $(i,j)\in\mathcal{E}$.

The network under study comprises $m{+}1$ agents collected in the set 
$\mathcal{V}=\{0,1,\dots,m\}$, where agent~$0$ acts as a setpoint source delivering reference or supervisory information, while agents $1$ to $m$ represent interconnected units (e.g., physical subsystems, autonomous devices, cyber--physical modules, or distributed industrial components) equipped with local sensing, estimation, and fault-mitigation capabilities.

The in-neighborhood of agent $i$ is defined as $\mathcal{N}_i=\{\,j\in\mathcal{V}\mid(i,j)\in\mathcal{E}\,\}$. To distinguish unit--unit communication from source broadcasting, the graph $\mathcal{G}$ is decomposed into two induced subgraphs: 
\begin{enumerate}
    \item The first, $\mathcal{G}_m=(\mathcal{V}_m,\mathcal{E}_m,\mathcal{W}_m)$ with $\mathcal{V}_m=\{1,\dots,m\}$, captures inter-unit interactions, where $\mathcal{W}_m$ denotes the corresponding edge-weight set inherited from $\mathcal{W}$. For this subgraph $\mathcal{G}_m$, the adjacency and in-degree matrices are $[\mathbb{A}_m]_{ij}=w_{ij}$ and $\mathbb{D}_m=\diag\{d_1,\dots,d_m\}$ where $d_i=\sum_{j:(i,j)\in\mathcal{E}_m}w_{ij}$, leading to the Laplacian matrix $\mathbb{L}_m = \mathbb{D}_m - \mathbb{A}_m$. This Laplacian encodes the relative information exchange among units.
    
    \item The second, $\mathcal{G}_0=(\mathcal{V}_0,\mathcal{E}_0,\mathcal{W}_0)$ with $\mathcal{V}_0=\{0\}\cup\{i:(i,0)\in\mathcal{E}\}$, captures source-to-unit connectivity, where $\mathcal{W}_0$ is defined analogously.
    For the subgraph $\mathcal{G}_0$, broadcasts from the source are represented by the matrix $\mathbb{A}_0=\diag\{w_{10},\dots,w_{m0}\}$. 
\end{enumerate}

An example of this decomposition is shown in Fig.~\ref{Fig:network}. The full communication structure is captured by the augmented (pinned) Laplacian $\mathbb{L} = \mathbb{L}_m + \mathbb{A}_0$. 
Define $w_i = d_i + w_{i0}$ as the total incoming weight to agent $i$, and let $\mathbb{W}=\diag\{w_1,\dots,w_m\}$. When $w_i=1$ for all $i$, i.e., $\mathbb{W}=I_m$, this gives
\begin{equation*}
    \mathbb{L}\coloneqq \mathbb{L}_m + \mathbb{A}_0 = \mathbb{W}-\mathbb{A}_m.
\end{equation*}
Finally, the distributed communication satisfies the balance condition
\begin{equation}\label{eq:ComNet:balance}
    (\mathbb{L}-\mathbb{A}_0)\mathbf{1}_m=\mathbf{0}_m
    \quad\Leftrightarrow\quad
    (\mathbb{A}_m+\mathbb{A}_0)\mathbf{1}_m=\mathbf{1}_m,
\end{equation}
meaning each unit receives the same total incoming weight from its neighbors and from the setpoint source.  
When $\mathbb{W}\neq I_m$, normalized weights 
$\tilde w_{ij}=w_{ij}/w_i$ and $\tilde w_{i0}=w_{i0}/w_i$ yield the normalized (row-stochastic) Laplacian 
$\tilde{\mathbb{L}}=\tilde{\mathbb{L}}_m+\tilde{\mathbb{A}}_0$, which also satisfies \eqref{eq:ComNet:balance}.

\begin{remark}\label{rem:threshold}
If at least one unit directly receives the setpoint signal ($w_{i0}>0$) and every unit is reachable from node~$0$ along a directed communication pathway contained in $\mathcal{E}$, then the augmented Laplacian $\mathbb{L}$ is positive stable. 
Thus, the supervisory reference affects the entire industrial network.
\end{remark}

\section{Problem Formulation}\label{sec:ProbFor}

We study a network consisting of a setpoint source denoted as agent~$0$ and $m$ interconnected units, each representing a generic dynamical subsystem whose physical output must track the setpoint in the presence of disturbances and sensor degradation. 
The dynamics of the $i$-th unit with disturbance and sensor degradation can be written as
\begin{equation}\label{eq:ProbFor:state-space}
    \begin{aligned}
    \dot{x}_i(t) &= A_ix_i(t) + B_iu_i(t) + D_i v_{i}(t), \\
    y_{f,i}(t) &= C_ix_i(t) + F_if_{s,i}(t).
    \end{aligned}
\end{equation}
Here, $x_i(t)\in\mathbb{R}^{n_x}$ is the local state and $u_i(t)\in\mathbb{R}^{n_u}$ is the control input to be determined later. The signal $y_{f,i}(t)\in\mathbb{R}^{n_y}$ denotes the measured output in the presence of the sensor
degradation $f_{s,i}(t)\in\mathbb{R}^{n_y}$. The pair $(A_i,B_i)$ is controllable and $(A_i,C_i)$ is observable, with the matrices of appropriate dimensions.
The disturbance matrix is $D_i\in\mathbb{R}^{n_x\times n_v}$ with disturbance signal $v_i(t)\in\mathbb{R}^{n_v}$ which belongs to $\mathcal{L}_2[0,\infty)$. Here, the fault-location matrix is assumed to be known, with $F_i \coloneqq I_{n_y}$. To reflect real-world diversity, the model \eqref{eq:ProbFor:state-space} is allowed to be heterogeneous across the network, meaning that the system matrices $(A_i,B_i,C_i,D_i)$ may differ for each unit.

The setpoint source broadcasts its reference output $y_0$ to the units that receive information from node~$0$, while
each unit $i$ shares its estimated output $\hat{y}_i$ with its neighbors. For every unit $i$, the total in-neighbor setpoint $z_i$ and the corresponding local tracking error $e_i$ are defined as
\begin{equation}\label{eq:ProbFor:z_and_e}
    z_i = \sum\nolimits_{j\in\mathcal{N}_{i}} w_{ij}\hat y_j + w_{i0} y_0, \qquad e_i = \hat{y}_i - z_i.
\end{equation}
The signal $z_i$ represents a weighted combination of information received by agent $i$ from its neighbors and, if available, from the setpoint source. In particular, $\hat{y}_j$ denotes the estimated output shared by neighboring agents $j$, while $y_0$ is the source output directly available only to agents with $w_{i0} > 0$.
This networked structure enables distributed estimation and tracking using only local communication. 

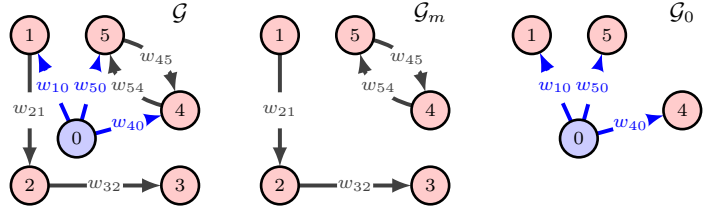
\begin{figure}[t!]
    \centering
    \scalebox{1}{{\begin{tikzpicture}
            \centering
            \Text[x=1,y=1.3,fontsize=\small]{$\mathcal{G}$};
            \Vertex[x=-.375,y=-.375,label=$0$,color=blue,opacity=0.2,size=.5]{L}
            \Vertex[x=-1,y=1,label=$1$,color=red,opacity=0.2,size=.5]{1}
            \Vertex[x=-1,y=-1,label=$2$,color=red,opacity=0.2,size=.5]{2}
            \Vertex[x=1,y=-1,label=$3$,color=red,opacity=0.2,size=.5]{3}
            \Vertex[x=1,y=0,label=$4$,color=red,opacity=0.2,size=.5]{4}
            \Vertex[x=0,y=1,label=$5$,color=red,opacity=0.2,size=.5]{5}
            \Edge[Direct,color=blue,label=$w_{10}$](L)(1)
            \Edge[Direct,color=blue,label=$w_{40}$](L)(4)
            \Edge[Direct,color=blue,label=$w_{50}$](L)(5)
            \Edge[Direct,label=$w_{21}$](1)(2)
            \Edge[Direct,label=$w_{32}$](2)(3)
            \Edge[Direct,bend=30,label=$w_{45}$](5)(4)
            \Edge[Direct,bend=30,label=$w_{54}$](4)(5)
        \end{tikzpicture}}}
    \qquad
    \scalebox{1}{{\begin{tikzpicture}
            \centering
            \Text[x=1,y=1.3,fontsize=\small]{$\mathcal{G}_m$};
            \Vertex[x=-1,y=1,label=$1$,color=red,opacity=0.2,size=.5]{1}
            \Vertex[x=-1,y=-1,label=$2$,color=red,opacity=0.2,size=.5]{2}
            \Vertex[x=1,y=-1,label=$3$,color=red,opacity=0.2,size=.5]{3}
            \Vertex[x=1,y=0,label=$4$,color=red,opacity=0.2,size=.5]{4}
            \Vertex[x=0,y=1,label=$5$,color=red,opacity=0.2,size=.5]{5}
            \Edge[Direct,label=$w_{21}$](1)(2)
            \Edge[Direct,label=$w_{32}$](2)(3)
            \Edge[Direct,bend=30,label=$w_{45}$](5)(4)
            \Edge[Direct,bend=30,label=$w_{54}$](4)(5)
        \end{tikzpicture}}}
    \qquad
    \scalebox{1}{{\begin{tikzpicture}
            \centering
            \Text[x=1,y=1.3,fontsize=\small]{$\mathcal{G}_0$};
            \Vertex[x=-.375,y=-.375,label=$0$,color=blue,opacity=0.2,size=.5]{L}
            \Vertex[x=-1,y=-1,style={fill=white,opacity=0},size=.5]{2}
            \Vertex[x=-1,y=1,label=$1$,color=red,opacity=0.2,size=.5]{1}
            \Vertex[x=1,y=0,label=$4$,color=red,opacity=0.2,size=.5]{4}
            \Vertex[x=0,y=1,label=$5$,color=red,opacity=0.2,size=.5]{5}
            \Edge[Direct,color=blue,label=$w_{10}$](L)(1)
            \Edge[Direct,color=blue,label=$w_{40}$](L)(4)
            \Edge[Direct,color=blue,label=$w_{50}$](L)(5)
        \end{tikzpicture}}}
    \caption{Example of a graph $\mathcal{G}$ with $m=5$, unit-to-unit subgraph $\mathcal{G}_m$, and source-to-unit subgraph $\mathcal{G}_0$, showing the decoupling and assignment of $w_{ij}$.}
    \label{Fig:network}
\end{figure}

To characterize robustness against external disturbances and quantify disturbance attenuation in the proposed cooperative framework, we adopt the standard $\mathcal{H}_\infty$ performance criterion given below.

\begin{definition}
A system with bounded disturbance $v(t)$ is said to achieve $\mathcal{H}_\infty$ performance if the following hold:
\begin{enumerate}
    \item The system is asymptotically stable when $v(t)\equiv 0$.
    \item With zero
    initial condition, there exists a constant $\gamma>0$ such that for any disturbance $v(t)\in\mathcal{L}_2[0,\infty)$,
    \begin{equation*}
        \int_{0}^{\infty} x^{\top}(t)x(t)\,dt
        \le \gamma^{2} \int_{0}^{\infty} v^{\top}(t)v(t)\,dt,
    \end{equation*}
    where $x(t)$ is the state vector of the system.
\end{enumerate}
\end{definition}

Using the $\mathcal{H}_\infty$ notion introduced above, then the proposed framework pursues three main objectives:

\begin{enumerate}[leftmargin=*]

\item \textbf{Observer design.}
Develop an augmented observer that provides reliable state and fault estimates under disturbances. The corresponding $\mathcal{H}_\infty$ performance is enforced through the LMI condition established in Theorem~\ref{thm:observer-Hinf}.

\item \textbf{Robust distributed control.} With point (1), 
design an inner feedback gain $\mathbf{K}$, using the LMI in Theorem~\ref{thm:network-Hinf} ($\mathcal{H}_\infty$ bound), together with an outer distributed integral action that enables setpoint tracking propagated through communication network while maintaining robustness.

\item \textbf{Consensus tracking.}
Ensure that all units converge to the setpoint source and to one another. Theorem~\ref{thm:ISS-error} with Proposition~\ref{prop:e-zero-implies-consensus} guarantees Input–to–State Stability (ISS) of the cooperative tracking error across the entire network.

\end{enumerate}

\section{Cooperative Fault--Tolerant Tracking}

This section embeds the local unit dynamics of Section~\ref{sec:ProbFor} into the
communication network of Section~\ref{sec:ComNetwork} and develops a
cooperative fault--tolerant control (FTC) strategy for compensating sensor
degradation while achieving distributed setpoint tracking. For brevity, the
time argument $(t)$ is omitted when no confusion arises.

Node~$0$ provides the source, or setpoint, signal $y_0\in\mathbb{R}^{n_y}$. At the network level, we define
$\bar y_0 = \mathbf{1}_m\otimes y_0$. Moreover, the stacked dynamics of the
$m$ units are given by
\begin{equation}\label{eq:FTC:network-dynamics}
    \dot{\bar x} = \mathbf{A}\bar x + \mathbf{B}\bar u + \mathbf{D}\bar v,  \qquad
    \bar y_f = \mathbf{C}\bar x + \mathbf{F}\bar f_s.
\end{equation}
To simplify notation, we introduce stacked vectors and block-diagonal matrices.
For any signal $g_i \in \mathbb{R}^{n}$, $i=1,\dots,m$, define the stacked vector
$\bar g = [g_1^\top,\dots,g_m^\top]^\top \in \mathbb{R}^{mn}$.
Accordingly, we define $\bar x$, $\bar u$, $\bar v$, $\bar y_f$, and $\bar f_s$.
For matrices, let $G_i \in \mathbb{R}^{n\times n}$ and define the block-diagonal matrix
$\mathbf{G} = \diag\{G_1,\dots,G_m\} \in \mathbb{R}^{mn\times mn}$.
In particular, $\mathbf{A}$, $\mathbf{B}$, $\mathbf{C}$, $\mathbf{D}$, and $\mathbf{F}$ are constructed in this manner.

The stacked in-neighbor setpoint is defined as
\begin{equation}\label{eq:FTC:z}
    \bar z = [z_1^\top,\dots,z_m^\top]^\top
    :=
    (\mathbb{A}_m\otimes I_{n_y})\hat{\bar y}
    + (\mathbb{A}_0\otimes I_{n_y})\bar y_0,
\end{equation}
which compactly represents the local aggregation rule in \eqref{eq:ProbFor:z_and_e} for all units.
Using $\mathbb{W} = I_m$ and \eqref{eq:ComNet:balance}, the cooperative tracking error is defined as
\begin{equation}\label{eq:FTC:coop-error}
    \bar e
    :=
    (\mathbb{L}\otimes I_{n_y})\hat{\bar y}
    - (\mathbb{A}_0\otimes I_{n_y})\bar y_0
    = \hat{\bar y} - \bar z.
\end{equation}
The error $\bar e$ captures the mismatch between each unit’s estimated output and its aggregated reference signal. 
When $\bar e = 0$, the estimated outputs of all units achieve a weighted consensus around the source $y_0$.

Before proceeding to controller and observer design, it is essential to understand the structural meaning of the cooperative tracking error~\eqref{eq:FTC:coop-error}. The following result establishes that this error vanishes exactly when all agents reach agreement with the setpoint satisfying the conditions in Remark~\ref{rem:threshold}.

\begin{proposition}\label{prop:e-zero-implies-consensus}
    Assume that Remark~\ref{rem:threshold} holds such that the augmented Laplacian $\mathbb{L}$ is positive stable (i.e., all eigenvalues have positive real parts) and the weights $w_{ij}, w_{i0}\ge 0$ satisfy $(\mathbb{A}_m+\mathbb{A}_0)\mathbf{1}_m=\mathbf{1}_m$. 
    Therefore $\bar e = 0$ if and only if $\hat y_i = y_0$ for all $i\in\{1,\dots,m\}$.
\end{proposition}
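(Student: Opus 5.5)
The proof rewrites the cooperative error so that the setpoint term is absorbed, and then uses invertibility of the augmented Laplacian. From the balance condition \eqref{eq:ComNet:balance} we have $(\mathbb{A}_m+\mathbb{A}_0)\mathbf{1}_m=\mathbf{1}_m$. Since $\mathbb{W}=I_m$ gives $\mathbb{L}=I_m-\mathbb{A}_m$, this is equivalent to $\mathbb{L}\mathbf{1}_m=\mathbb{A}_0\mathbf{1}_m$. Applying the Kronecker mixed-product rule then yields
\begin{equation*}
(\mathbb{L}\otimes I_{n_y})(\mathbf{1}_m\otimes y_0) = (\mathbb{L}\mathbf{1}_m)\otimes y_0 = (\mathbb{A}_0\mathbf{1}_m)\otimes y_0 = (\mathbb{A}_0\otimes I_{n_y})\bar y_0 .
\end{equation*}
Substituting into \eqref{eq:FTC:coop-error} gives the key identity
\begin{equation*}
\bar e = (\mathbb{L}\otimes I_{n_y})\bigl(\hat{\bar y} - \bar y_0\bigr),
\end{equation*}
so $\bar e$ is a linear image of the disagreement vector $\hat{\bar y}-\mathbf{1}_m\otimes y_0$.

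Both directions follow from this identity. For sufficiency, if $\hat y_i=y_0$ for all $i$, then $\hat{\bar y}=\bar y_0$ and the identity gives $\bar e=0$ at once. This direction needs only the balance condition, not positive stability.

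For necessity, I use the hypothesis of Remark~\ref{rem:threshold}. Positive stability means every eigenvalue of $\mathbb{L}$ has positive real part, so $0$ is not an eigenvalue and $\mathbb{L}$ is nonsingular. The eigenvalues of $\mathbb{L}\otimes I_{n_y}$ are those of $\mathbb{L}$, each repeated $n_y$ times, so $\mathbb{L}\otimes I_{n_y}$ is nonsingular as well, with inverse $\mathbb{L}^{-1}\otimes I_{n_y}$. Hence $\bar e=0$ forces $\hat{\bar y}-\bar y_0=0$, which means $\hat y_i=y_0$ for every $i\in\{1,\dots,m\}$.

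I do not expect a real obstacle. The only delicate point is confirming that the Kronecker expression in \eqref{eq:FTC:coop-error} agrees with the agent-wise definition in \eqref{eq:ProbFor:z_and_e} when $\mathbb{W}=I_m$. If needed, I would check this componentwise: the $i$-th block of $\bar e$ is $\hat y_i-\sum_j w_{ij}\hat y_j - w_{i0}y_0$. The same check, using $\sum_j w_{ij}+w_{i0}=1$, also verifies the identity blockwise as $e_i=\sum_j w_{ij}(\hat y_i-\hat y_j)+w_{i0}(\hat y_i-y_0)$. For $\mathbb{W}\neq I_m$, the same argument applies with the normalized Laplacian $\tilde{\mathbb{L}}$, which also satisfies \eqref{eq:ComNet:balance}.
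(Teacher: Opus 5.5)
Your proof is correct, but your necessity direction takes a different route from the paper's. The paper never uses invertibility of $\mathbb{L}$. It takes $n_y=1$ and rewrites $\bar e=0$ as $\hat y_i=\sum_j w_{ij}\hat y_j+w_{i0}y_0$, i.e.\ $\tilde y_i=\sum_j w_{ij}\tilde y_j$ with $\tilde y_i=\hat y_i-y_0$. It then runs a maximum-principle argument. At a node $i^\star$ maximizing $|\tilde y_i|$, nonnegative weights with row sums at most one force every in-neighbor to attain the same maximum. Following a path from $i^\star$ back to a pinned node with $w_{i0}>0$, where the row sum is strictly below one, then forces the maximum to vanish. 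Vector outputs are treated componentwise.

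You instead absorb the setpoint through $\mathbb{L}\mathbf{1}_m=\mathbb{A}_0\mathbf{1}_m$ to obtain $\bar e=(\mathbb{L}\otimes I_{n_y})(\hat{\bar y}-\bar y_0)$, and then you only need $\det\mathbb{L}\neq 0$. That follows from the positive-stability hypothesis written into the proposition, so your use of it is legitimate. Incidentally, that identity holds for the pinned Laplacian even without the balance condition, since $\mathbb{L}_m\mathbf{1}_m=0$. Balance is needed only for $\bar e=\hat{\bar y}-\bar z$.

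Your route is shorter and handles $n_y>1$ directly. It also needs neither nonnegativity of the weights nor any path-following. The paper's route works from the combinatorial conditions of Remark~\ref{rem:threshold} themselves: nonnegative row-stochastic weights plus reachability from node~$0$. In effect it proves $\ker\mathbb{L}=\{0\}$, rather than relying on the spectral claim of that remark, which the paper asserts without proof.
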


Thus, enforcing $\bar e = 0$ guarantees network-wide consensus and tracking of the source $y_0$. The remainder of this subsection develops an augmented observer capable of estimating both the system state and the unknown sensor fault despite disturbances and sensor degradation.

\subsection{Cooperative Observer Design}

Since accurate tracking requires reliable state information, unknown sensor faults render direct use of the measured outputs $\bar y_f$ inadequate, motivating the need for an augmented state–fault observer.

To estimate both the system state and the sensor fault, we introduce an augmented state representation. Define the augmented state as $\bar x_a = [\bar x^\top, \bar f_s^\top]^\top$.
The corresponding augmented networked system is given by
\begin{equation} \label{eq:FTC:combined-dynamics}
    \begin{aligned}
    \mathbf{E}_1\dot{\bar x}_a &= \mathbf{A}_a\bar x_a + \mathbf{B}\bar u + \mathbf{D}\bar v, \\
    \bar y_f &= \mathbf{E}_2\bar x_a,
    \end{aligned}
\end{equation}
where $\mathbf{A}_a = [\mathbf{A},\; 0_{\bar n_x\times \bar n_y}]$,
$\mathbf{E}_1 = [I_{\bar n_x},\; 0_{\bar n_x\times \bar n_y}]$, and 
$\mathbf{E}_2 = [\mathbf{C},\; \mathbf{F}]$ with $\mathbf{F} = I_{\bar n_y}$.
More specifically, $\mathbf{E}_1$ selects the state component of the augmented vector, while $\mathbf{E}_2$ maps the augmented state to the measured output.
Since $\rank([\mathbf{E}_1^\top,\mathbf{E}_2^\top]^\top)=\bar n_x+\bar n_y$, 
the matrix $[\mathbf{E}_1^\top,\mathbf{E}_2^\top]^\top$ is nonsingular. 
Define
\begin{equation*}
    \mathbf{F}_1 \coloneqq [I_{\bar n_x},\;-\mathbf{C}^\top]^\top,
    \qquad
    \mathbf{F}_2 \coloneqq [0_{\bar n_x\times \bar n_y}^\top,\;I_{\bar n_y}]^\top .
\end{equation*}
These matrices provide an explicit inverse transformation between the descriptor-like augmented representation and the standard augmented state coordinates. Then,
\begin{equation*}
    \begin{bmatrix}
        \mathbf{E}_1 \\
        \mathbf{E}_2 
    \end{bmatrix}
    \begin{bmatrix}
        \mathbf{F}_1 & \mathbf{F}_2
    \end{bmatrix}
    =
    \begin{bmatrix}
        \mathbf{F}_1 & \mathbf{F}_2
    \end{bmatrix}
    \begin{bmatrix}
        \mathbf{E}_1 \\
        \mathbf{E}_2 
    \end{bmatrix}
    = I_{\bar n_x + \bar n_y},
\end{equation*}
which implies $([\mathbf{E}_1^\top, \mathbf{E}_2^\top]^\top)^{-1} = [\mathbf{F}_1, \mathbf{F}_2]$.

Multiplying $\mathbf{F}_1$ to both sides of \eqref{eq:FTC:combined-dynamics} and using the identity $\mathbf{F}_1\mathbf{E}_1 + \mathbf{F}_2\mathbf{E}_2 = I_{\bar n_x + \bar n_y}$, we obtain
\begin{equation} \label{eq:FTC:augmented-dynamics}
    \begin{aligned}
    \mathbf{F}_1 \mathbf{E}_1 \dot{\bar x}_a 
    &= \mathbf{F}_1\mathbf{A}_a \bar x_a + \mathbf{F}_1\mathbf{B}\bar u + \mathbf{F}_1\mathbf{D} \bar v, \\
    \dot{\bar x}_a 
    &= \mathbf{F}_1\mathbf{A}_a \bar x_a + \mathbf{F}_1\mathbf{B}\bar u + \mathbf{F}_1\mathbf{D} \bar v + \mathbf{F}_2\mathbf{E}_2 \dot{\bar x}_a,
    \end{aligned}
\end{equation}
where the term $\mathbf{F}_2\mathbf{E}_2 \dot{\bar x}_a$ arises from reconstructing the full augmented dynamics using the inverse transformation.

Consider the virtual observer
\begin{equation} \label{eq:FTC:virtual-observer}
    \begin{aligned}
    \dot{\bar x}_o
    &= \mathbf{F}_1\mathbf{A}_a \bar x_o + \mathbf{F}_1\mathbf{B}\bar u + \mathbf{F}_2\mathbf{E}_2 \dot{\bar x}_a 
    + \mathbf{L}( \bar y_f - \mathbf{E}_2 \bar x_o),
    \end{aligned}
\end{equation}
where $\bar x_o$ denotes the observer state (i.e., an estimate of the augmented state $\bar x_a$), and $\mathbf{L}$ is a gain matrix to be designed.
Define the estimation error $\bar \epsilon = \bar x_a - \bar x_o$. Subtracting \eqref{eq:FTC:augmented-dynamics} from \eqref{eq:FTC:virtual-observer}, the estimation error dynamics become
\begin{equation}\label{eq:FTC:estimation-error}
    \dot{\bar \epsilon} = (\mathbf{F}_1\mathbf{A}_a - \mathbf{L}\mathbf{E}_2)\bar \epsilon + \mathbf{F}_1\mathbf{D} \bar v.
\end{equation}

With the augmented representation in place, the first objective is to design an observer that guarantees robust estimation under sensor degradation. This is achieved by the following $\mathcal{H}_\infty$ condition.

\begin{theorem} \label{thm:observer-Hinf}
    For a given constant $\delta>0$, the estimation error dynamics \eqref{eq:FTC:estimation-error} is asymptotically stable with disturbance attenuation level $\delta$ if there exist a matrix $\mathbf{P}\succ 0 \in\mathbb{R}^{(\bar n_x+\bar n_y)\times(\bar n_x+\bar n_y)}$ and a matrix $\mathbf{H}\in\mathbb{R}^{(\bar n_x+\bar n_y)\times \bar n_y}$ such that the following LMI holds:
    \begin{equation} \label{eq:FTC:LMI-observer}
        \Pi =
        \begin{bmatrix}
        \Delta & \mathbf{P}\mathbf{F}_1\mathbf{D} \\
        *      & -\delta^2 I_{\bar n_v}
        \end{bmatrix} < 0,
    \end{equation}
    where $\Delta = \mathbf{P}\mathbf{F}_1\mathbf{A}_a + \mathbf{A}_a^{\top}\mathbf{F}_1^{\top}\mathbf{P} - \mathbf{H}\mathbf{E}_2 - \mathbf{E}_2^{\top}\mathbf{H}^{\top} + I_{\bar n_x + \bar n_y}.$
\end{theorem}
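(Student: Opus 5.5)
We use the quadratic Lyapunov function $V(\bar\epsilon)=\bar\epsilon^\top\mathbf{P}\bar\epsilon$ with $\mathbf{P}\succ 0$ from the LMI, together with the change of variables $\mathbf{H}=\mathbf{P}\mathbf{L}$, i.e.\ the observer gain is recovered as $\mathbf{L}=\mathbf{P}^{-1}\mathbf{H}$. This substitution is what makes the condition convex: the bilinear product $\mathbf{P}\mathbf{L}\mathbf{E}_2$ becomes $\mathbf{H}\mathbf{E}_2$, which is linear in the decision variables. With this choice, $\Delta$ equals $\mathbf{P}\mathbf{A}_e+\mathbf{A}_e^\top\mathbf{P}+I$, where $\mathbf{A}_e=\mathbf{F}_1\mathbf{A}_a-\mathbf{L}\mathbf{E}_2$ is the error-system matrix in \eqref{eq:FTC:estimation-error}.

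\textbf{Step 1 (dissipation inequality).} Differentiate $V$ along \eqref{eq:FTC:estimation-error}:
\[
\dot V=\bar\epsilon^\top(\mathbf{P}\mathbf{A}_e+\mathbf{A}_e^\top\mathbf{P})\bar\epsilon+2\bar\epsilon^\top\mathbf{P}\mathbf{F}_1\mathbf{D}\bar v .
\]
Form $J=\dot V+\bar\epsilon^\top\bar\epsilon-\delta^2\bar v^\top\bar v$. Written as a quadratic form in $\xi=[\bar\epsilon^\top,\bar v^\top]^\top$, this is exactly $J=\xi^\top\Pi\,\xi$, with $\Pi$ as in \eqref{eq:FTC:LMI-observer}. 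Hence $\Pi<0$ gives $J<0$ for every $\xi\neq0$.

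\textbf{Step 2 (stability when $\bar v\equiv0$).} The upper-left block of $\Pi<0$ gives $\Delta<0$. Therefore $\mathbf{P}\mathbf{A}_e+\mathbf{A}_e^\top\mathbf{P}<-I<0$, so $\dot V\le-\bar\epsilon^\top\bar\epsilon<0$ for $\bar\epsilon\neq0$. This yields asymptotic (indeed exponential) stability, and $\mathbf{A}_e$ is Hurwitz.

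\textbf{Step 3 ($\mathcal{H}_\infty$ bound).} Under zero initial condition, integrate $J\le0$ over $[0,T]$:
\[
V(T)+\int_0^T\bar\epsilon^\top\bar\epsilon\,dt\le\delta^2\int_0^T\bar v^\top\bar v\,dt .
\]
Since $V(T)\ge0$, the term $V(T)$ can be dropped. Letting $T\to\infty$, which is legitimate because $\bar v\in\mathcal{L}_2$, gives $\|\bar\epsilon\|_{\mathcal{L}_2}\le\delta\|\bar v\|_{\mathcal{L}_2}$, i.e.\ attenuation level $\delta$ in the sense of the Definition.

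The computations are routine. The only delicate point is the correct handling of the parametrization $\mathbf{H}=\mathbf{P}\mathbf{L}$, and verifying that the cross term $\mathbf{P}\mathbf{F}_1\mathbf{D}$ appears symmetrically in both off-diagonal blocks of $\Pi$. One further check concerns the term $\mathbf{F}_2\mathbf{E}_2\dot{\bar x}_a$ in the virtual observer \eqref{eq:FTC:virtual-observer}: it cancels exactly in the error dynamics. The theorem is stated for \eqref{eq:FTC:estimation-error}, so this cancellation is taken as given.
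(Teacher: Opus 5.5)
Your proposal is correct and follows the paper's proof essentially step for step: the Lyapunov function $V=\bar\epsilon^\top\mathbf{P}\bar\epsilon$, the substitution $\mathbf{H}=\mathbf{P}\mathbf{L}$, and the identity $\dot V+\|\bar\epsilon\|^2-\delta^2\|\bar v\|^2=\xi^\top\Pi\,\xi$ are all the same as in the paper. The only difference is cosmetic: you get the $\mathcal{L}_2$ bound by integrating the dissipation inequality from a zero initial condition, which matches the Definition directly, whereas the paper invokes the bounded-real lemma for the same conclusion.
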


This condition ensures that the observer provides a disturbance-attenuated estimate of the augmented state, guaranteeing that the effect of disturbances on the estimation error is bounded in the $\mathcal{H}_\infty$ sense.

According to Definition~1, the estimation error dynamics \eqref{eq:FTC:estimation-error} using the virtual observer \eqref{eq:FTC:virtual-observer} satisfies $\mathcal{H}_{\infty}$ performance with disturbance attenuation level $\delta$. However, since the term $\mathbf{F}_2\mathbf{E}_2 \dot{\bar x}_a$ depends on the unknown augmented state, the virtual observer \eqref{eq:FTC:virtual-observer} is not directly implementable. To eliminate this dependence, we introduce a state transformation.
Define the new observer state $\bar \eta = \bar x_o - \mathbf{F}_2\mathbf{E}_2 \bar x_a$ and differentiating gives
\begin{equation} \label{eq:FTC:observer}
    \begin{cases}
    \dot{\bar \eta}
    &= \mathbf{F}_1\mathbf{A}_a \bar x_o + \mathbf{F}_1\mathbf{B}\bar u + \mathbf{L}( \bar y_f - \mathbf{E}_2 \bar x_o), \\
    &= (\mathbf{F}_1\mathbf{A}_a - \mathbf{L}\mathbf{E}_2) \bar \eta + \mathbf{F}_1\mathbf{B}\bar u + \big[(\mathbf{F}_1\mathbf{A}_a - \mathbf{L}\mathbf{E}_2)\mathbf{F}_2 + \mathbf{L}\big] \bar y_f, \\
    \bar x_o &= \bar \eta + \mathbf{F}_2 \bar y_f,
    \end{cases}
\end{equation}
where $\mathbf{L} := \mathbf{P}^{-1}\mathbf{H}$. 
This transformation eliminates the dependence on the unknown term $\dot{\bar x}_a$, yielding an implementable observer driven solely by measurable signals.
Theorem~\ref{thm:observer-Hinf} guarantees that each unit obtains a disturbance–attenuated estimate of its augmented state. This estimate is now used to synthesize an $\mathcal{H}_\infty$ state–feedback law that ensures closed–loop stability in the presence of both unit disturbances and observer-induced perturbations.

\subsection{Robust Cooperative Control}

Using the observer estimate, consider the control law
\begin{equation} \label{eq:FTC:control}
    \bar u = \bar u^\ast - \mathbf{I}(\mathbf{K}_\ell,\bar e),
\end{equation}
where $\mathbf{I}(\cdot)$ denotes a distributed integral (or consensus-based) feedback term defined based on the tracking error.

In Theorem~\ref{thm:network-Hinf}, we design an inner feedback gain $\mathbf{K}$ by setting $\bar u^\ast \coloneqq \mathbf{K}\mathbf{E}_1 \bar x_o$. 
Introducing $\bar \vartheta \coloneqq \mathbf{K}\mathbf{E}_1 \bar \epsilon$, $\bar \theta = [\bar v^\top,\bar \vartheta^\top]^\top$, and $\mathbf{B}_\theta \coloneqq [\mathbf{D}, -\mathbf{B}]$, the closed-loop networked system can be written as
\begin{equation} \label{eq:FTC:closed-loop}
    \begin{aligned}
    \dot{\bar x}
    &= (\mathbf{A} + \mathbf{B}\mathbf{K})\bar x - \mathbf{B}\mathbf{K}\mathbf{E}_1 \bar \epsilon + \mathbf{D}\bar v \\
    &= (\mathbf{A} + \mathbf{B}\mathbf{K})\bar x + \mathbf{B}_\theta \bar \theta,
    \end{aligned}
\end{equation}
where $\bar \epsilon = \bar x_a - \bar x_o$ and $\mathbf{E}_1\bar x_a = \bar x$. 
We now proceed to the second objective: designing a networked control gain that attenuates the effect of $\bar\theta$ on the global state.

\begin{theorem}\label{thm:network-Hinf}
Given the observer gain $\mathbf{L}$ of Theorem~\ref{thm:observer-Hinf}, and let $\alpha>0$ and $\delta>0$, where $\delta$ is the attenuation level from Theorem~\ref{thm:observer-Hinf}.
If there exist matrices
$\mathbf{R}\succ 0\in\mathbb{R}^{\bar n_x\times\bar n_x}$ and
$\mathbf{G}\in\mathbb{R}^{\bar n_u\times\bar n_x}$ such that the LMI
\begin{equation}\label{eq:FTC:LMI-control-4x4}
    \Lambda =
    \begin{bmatrix}
        \bar\Delta & \mathbf{R} & -\mathbf{B} & \mathbf{D} \\
        *          & -I_{\bar n_x}         & 0           & 0                 \\
        *          & *          & -\alpha I_{\bar n_u}   & 0                 \\
        *          & *          & *           & -\delta^{2} I_{\bar n_v}
    \end{bmatrix} < 0,
\end{equation}
holds, where $\bar\Delta \coloneqq \mathbf{A}\mathbf{R} + \mathbf{R}\mathbf{A}^\top + \mathbf{B}\mathbf{G} + \mathbf{G}^\top\mathbf{B}^\top$,
then the inner feedback gain $\mathbf{K} \coloneqq \mathbf{G}\mathbf{R}^{-1}$
renders the closed--loop networked system \eqref{eq:FTC:closed-loop} asymptotically stable and guarantees the $\mathcal{H}_\infty$ performance $\|\bar x\|_{2} < \gamma\|\bar v\|_{2},$
where the attenuation level $\gamma$ is given by
$\gamma = \sqrt{(\alpha \lambda_{\max}(\mathbf{K}^\top \mathbf{K}) + 1)\delta^2}.$
\end{theorem}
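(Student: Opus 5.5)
We prove this with a quadratic Lyapunov function $V(\bar x)=\bar x^\top \mathbf{P}_c\bar x$, where $\mathbf{P}_c\coloneqq\mathbf{R}^{-1}\succ 0$, applied to the closed loop \eqref{eq:FTC:closed-loop}. The two inputs $\bar v$ and $\bar\vartheta=\mathbf{K}\mathbf{E}_1\bar\epsilon$ are kept separate, because they enter through $\mathbf{D}$ and $-\mathbf{B}$ with different weights ($\delta^2$ and $\alpha$). The aim is the dissipation inequality
\begin{equation*}
\dot V + \bar x^\top\bar x - \alpha\,\bar\vartheta^\top\bar\vartheta - \delta^2\,\bar v^\top\bar v < 0
\end{equation*}
for all nonzero $(\bar x,\bar\vartheta,\bar v)$.

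\textbf{Step 1 (Schur complement).} First remove the $-I_{\bar n_x}$ block of $\Lambda$ by a Schur complement. Since $\Lambda<0$, we get
\begin{equation*}
\begin{bmatrix}
\bar\Delta+\mathbf{R}\mathbf{R} & -\mathbf{B} & \mathbf{D}\\
* & -\alpha I_{\bar n_u} & 0\\
* & * & -\delta^2 I_{\bar n_v}
\end{bmatrix}<0 .
\end{equation*}
Next apply the congruence $\diag\{\mathbf{R}^{-1},I,I\}$ and substitute $\mathbf{G}=\mathbf{K}\mathbf{R}$. Then $\mathbf{R}^{-1}\bar\Delta\mathbf{R}^{-1}=\mathbf{P}_c(\mathbf{A}+\mathbf{B}\mathbf{K})+(\mathbf{A}+\mathbf{B}\mathbf{K})^\top\mathbf{P}_c$, and $\mathbf{R}^{-1}\mathbf{R}\mathbf{R}\mathbf{R}^{-1}=I$. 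The off-diagonal blocks become $-\mathbf{P}_c\mathbf{B}$ and $\mathbf{P}_c\mathbf{D}$. The resulting matrix is exactly the quadratic form of $\dot V+\bar x^\top\bar x-\alpha\bar\vartheta^\top\bar\vartheta-\delta^2\bar v^\top\bar v$ in the stacked vector $[\bar x^\top,\bar\vartheta^\top,\bar v^\top]^\top$, evaluated along \eqref{eq:FTC:closed-loop} with $\mathbf{B}_\theta=[\mathbf{D},-\mathbf{B}]$. This gives the dissipation inequality. Its $(1,1)$ block shows that $\mathbf{A}+\mathbf{B}\mathbf{K}$ is Hurwitz, which is the stability claim.

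\textbf{Step 2 (integration).} Integrate from $0$ to $\infty$ with zero initial conditions and use $V\ge 0$:
\begin{equation*}
\|\bar x\|_2^2 < \alpha\|\bar\vartheta\|_2^2+\delta^2\|\bar v\|_2^2 .
\end{equation*}
Then bound $\|\bar\vartheta\|_2^2\le \lambda_{\max}(\mathbf{K}^\top\mathbf{K})\,\|\mathbf{E}_1\bar\epsilon\|_2^2\le\lambda_{\max}(\mathbf{K}^\top\mathbf{K})\,\|\bar\epsilon\|_2^2$, using $\|\mathbf{E}_1\|=1$.

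\textbf{Step 3 (observer bound).} Theorem~\ref{thm:observer-Hinf}, applied to the error dynamics \eqref{eq:FTC:estimation-error} with $\bar\epsilon(0)=0$, gives $\|\bar\epsilon\|_2^2\le\delta^2\|\bar v\|_2^2$; the identity term in $\Delta$ plays the role of the output weight. Substituting,
\begin{equation*}
\|\bar x\|_2^2<\big(\alpha\lambda_{\max}(\mathbf{K}^\top\mathbf{K})\delta^2+\delta^2\big)\|\bar v\|_2^2=\gamma^2\|\bar v\|_2^2,
\end{equation*}
which is the stated $\gamma$.

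\textbf{Main obstacle.} The delicate part is the cascade interpretation. The bound on $\bar\vartheta$ must come from the observer's $\mathcal{H}_\infty$ guarantee under zero initial estimation error. This requires the observer error to be independent of $\bar x$, which holds because \eqref{eq:FTC:estimation-error} is driven only by $\bar v$. The $\mathcal{H}_\infty$ inequality of Theorem~\ref{thm:observer-Hinf} must also be stated as $\int\bar\epsilon^\top\bar\epsilon\le\delta^2\int\bar v^\top\bar v$, which follows from the $+I$ term in $\Delta$ by the same dissipation argument. Finally, the Schur-complement sign and congruence bookkeeping in Step 1 must be checked so that $\mathbf{R}\mathbf{R}$ indeed becomes the $\bar x^\top\bar x$ weight.
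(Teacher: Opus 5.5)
Your proposal is correct and is essentially the paper's proof. It uses the same storage function $\bar x^\top\mathbf{R}^{-1}\bar x$, the same Schur complement on the $\mathbf{R}\mathbf{R}$ term, and the same congruence by $\diag\{\mathbf{R}^{\pm 1},I,I\}$ with $\mathbf{G}=\mathbf{K}\mathbf{R}$; it then integrates the same weighted dissipation inequality and applies the observer bound $\|\bar\epsilon\|_2\le\delta\|\bar v\|_2$ exactly as in Remark~\ref{rem:gamma-delta}. The only differences are cosmetic: you run the equivalences from the LMI to the dissipation inequality rather than the reverse, and you read the Hurwitz property of $\mathbf{A}+\mathbf{B}\mathbf{K}$ directly off the $(1,1)$ Lyapunov block instead of citing the bounded real lemma.
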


\begin{remark}\label{rem:gamma-delta}
In view of Theorem~\ref{thm:observer-Hinf}, the estimation error satisfies an $\mathcal{H}_\infty$ bound with attenuation level $\delta>0$. Since $\bar\vartheta = \mathbf{K}\mathbf{E}_1\bar\epsilon$, it follows that
\begin{equation*}
    \alpha\|\bar\vartheta\|_{2}^2
    \le \alpha\|\mathbf{K}\mathbf{E}_1\|^2\|\bar\epsilon\|_{2}^2
    \le \alpha\|\mathbf{K}\|^2\|\bar\epsilon\|_{2}^2
    \le \alpha\lambda_{\max}(\mathbf{K}^\top\mathbf{K})\,\delta^2\|\bar v\|_{2}^2,
\end{equation*}
Consequently, the weighted input energy satisfies
\begin{equation*}
    \|\bar\theta\|_2^2 = \delta^2\|\bar v\|_2^2 + \alpha\|\bar\vartheta\|_2^2
    \le \big(\alpha \lambda_{\max}(\mathbf{K}^\top \mathbf{K}) + 1\big)\delta^2 \|\bar v\|_2^2,
\end{equation*}
which leads to the closed-loop attenuation level $\gamma$.
\end{remark}

This inner loop $\mathbf{K}$ in Theorem~\ref{thm:network-Hinf} guarantees robust stabilization and disturbance attenuation. However, $\bar u^\ast$ alone does not ensure tracking of the network setpoint. To remove steady-state tracking offsets, we introduce an outer-loop integral state $\bar\xi$ driven by the cooperative tracking error,
\begin{equation*}
    \dot{\bar\xi} = \bar e, \qquad \bar e = \hat{\bar y}-\bar z,
\end{equation*}
and define the tracking term as $\mathbf{I}(\cdot)\coloneqq \mathbf{K}_\ell\bar\xi$. 

The gain $\mathbf{K}_\ell$ is selected as a diagonal positive gain matrix and tuned to regulate the speed of the outer tracking loop. In practice, $\mathbf{K}_\ell$ is chosen after the inner gain $\mathbf{K}$ has been synthesized, so that the integral action evolves on a slower time scale than the stabilized inner-loop dynamics. This separation mitigates excessive overshoot while preserving the disturbance-attenuation properties of the inner loop.

By feeding back this accumulated distributed tracking error, the outer loop compensates steady-state offsets caused by disturbances, heterogeneity, or sensor degradation, and drives the estimated outputs toward the reference $y_0$ according to the communication structure satisfying Remark~\ref{rem:threshold}.

\subsection{ISS Cooperative Tracking}

With both observer and controller satisfying their $\mathcal{H}_\infty$ objectives, we now examine the resulting cooperative error dynamics to complete the third objective. The next theorem shows that the network achieves ISS cooperative tracking.

\begin{theorem}\label{thm:ISS-error}
Assume that \eqref{eq:ComNet:balance} and Remark~\ref{rem:threshold} hold, and
let the gains $\mathbf{L}$ and $\mathbf{K}$ be designed according to
Theorems~\ref{thm:observer-Hinf} and~\ref{thm:network-Hinf}.
Let $\bar x_0$ be a constant reference state corresponding to the setpoint output $\bar y_0$, and consider the closed--loop networked system \eqref{eq:FTC:closed-loop} with
$\bar \vartheta_\ast = \bar \vartheta_1 + \bar \vartheta_2$,
$\bar \vartheta_1 \coloneqq \mathbf{K}\mathbf{E}_1 \bar \epsilon$,
$\bar \vartheta_2 \coloneqq \mathbf{K}_\ell \bar \xi$,
$\bar \theta^\ast = [\bar v^\top,\bar \vartheta_\ast^\top]^\top$,
and $\mathbf{B}_\theta \coloneqq [\mathbf{D},-\mathbf{B}]$.
Define the cooperative state error
\begin{equation}\label{eq:FTC:coop-error-output}
    \bar e_x
    :=
    (\mathbb{L}\otimes I_{n_x})\bar x
    - (\mathbb{A}_0\otimes I_{n_x})\bar x_0 .
\end{equation}
Then there exist positive constants $c_1,c_2,c_3>0$ such that the following
ISS estimate holds for all $t\ge 0$:
\begin{equation}\label{eq:ISS-bound}
    \|\bar e_x(t)\|
    \le
    c_1 e^{-c_2 t}\|\bar e_x(0)\|
    + c_3 \sup_{0\le \tau\le t}\|\bar\theta^\ast(\tau)\|.
\end{equation}
If $\bar\theta^\ast\equiv 0$, then $\bar e_x(t)\to 0$ exponentially.
\end{theorem}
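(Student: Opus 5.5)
The plan is to treat $\bar\theta^\ast$ exactly as the statement does, as an exogenous input to the closed loop \eqref{eq:FTC:closed-loop}. Here $\bar\vartheta_\ast=\bar\vartheta_1+\bar\vartheta_2$ enters through $-\mathbf{B}$, so that
\begin{equation*}
\dot{\bar x}=(\mathbf{A}+\mathbf{B}\mathbf{K})\bar x+\mathbf{B}_\theta\bar\theta^\ast .
\end{equation*}
I would then transport this linear system into the coordinates $\bar e_x$ by a similarity transformation built from the pinned Laplacian. The ISS estimate \eqref{eq:ISS-bound} then follows from the variation-of-constants formula and an exponential bound on the transformed state matrix, which comes from the Lyapunov certificate already produced in Theorem~\ref{thm:network-Hinf}.

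\textbf{Step 1 (coordinate change).} Write $\mathcal{L}:=\mathbb{L}\otimes I_{n_x}$. By Remark~\ref{rem:threshold}, $\mathbb{L}$ is positive stable, hence invertible, so $\mathcal{L}$ is invertible as well. The balance condition \eqref{eq:ComNet:balance} gives $\mathbb{L}\mathbf{1}_m=\mathbb{A}_0\mathbf{1}_m$. Since $\bar x_0=\mathbf{1}_m\otimes x_0$, this yields $(\mathbb{A}_0\otimes I_{n_x})\bar x_0=\mathcal{L}\bar x_0$, and therefore
\begin{equation*}
\bar e_x=\mathcal{L}(\bar x-\bar x_0),\qquad \bar x=\mathcal{L}^{-1}\bar e_x+\bar x_0 .
\end{equation*}
This is the same mechanism as in Proposition~\ref{prop:e-zero-implies-consensus}, now applied at the state level.

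\textbf{Step 2 (error dynamics).} Differentiate using $\dot{\bar x}_0=0$ and set $\mathbf{A}_{cl}:=\mathbf{A}+\mathbf{B}\mathbf{K}$. This gives
\begin{equation*}
\dot{\bar e}_x=\mathcal{L}\mathbf{A}_{cl}\mathcal{L}^{-1}\bar e_x+\mathcal{L}\mathbf{A}_{cl}\bar x_0+\mathcal{L}\mathbf{B}_\theta\bar\theta^\ast .
\end{equation*}
The middle term is the step I expect to be the main obstacle. I would use the hypothesis that $\bar x_0$ is a reference state \emph{corresponding} to the setpoint: it is a rest point of the inner-loop closed-loop dynamics when the input $\bar\theta^\ast$ vanishes, so $\mathbf{A}_{cl}\bar x_0=0$, and the term disappears. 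This is the only reading under which the final sentence of the theorem, exponential convergence when $\bar\theta^\ast\equiv0$, can hold. Indeed, any nonzero constant $\mathcal{L}\mathbf{A}_{cl}\bar x_0$ would produce a nonzero limit of $\bar e_x$ with zero input. The steady-state offset that would otherwise arise is carried by the integral part $\bar\vartheta_2=\mathbf{K}_\ell\bar\xi$ inside $\bar\theta^\ast$, which the statement counts as part of the input. With this term removed, $\bar e_x$ obeys the linear system $\dot{\bar e}_x=M\bar e_x+N\bar\theta^\ast$, where $M:=\mathcal{L}\mathbf{A}_{cl}\mathcal{L}^{-1}$ and $N:=\mathcal{L}\mathbf{B}_\theta$.

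\textbf{Step 3 (exponential bound and ISS).} Set $Q:=\mathbf{R}^{-1}\succ0$. The $(1,1)$ block of $\Lambda<0$ in \eqref{eq:FTC:LMI-control-4x4} gives $\mathbf{A}_{cl}\mathbf{R}+\mathbf{R}\mathbf{A}_{cl}^\top\prec-\mathbf{R}^2$, and a congruence with $\mathbf{R}^{-1}$ gives $\mathbf{A}_{cl}^\top Q+Q\mathbf{A}_{cl}\prec -I$. Hence $\mathbf{A}_{cl}$ is Hurwitz with decay rate $\mu:=1/(2\lambda_{\max}(Q))$. Taking $P_e:=\mathcal{L}^{-\top}Q\mathcal{L}^{-1}$ gives $M^\top P_e+P_eM\prec0$, with the same rate because similarity preserves the spectrum. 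The standard comparison argument then yields
\begin{equation*}
\|e^{Mt}\|\le c_1e^{-c_2t},\qquad c_1=\|\mathcal{L}\|\,\|\mathcal{L}^{-1}\|\sqrt{\lambda_{\max}(Q)/\lambda_{\min}(Q)},\qquad c_2=\mu .
\end{equation*}
The variation-of-constants formula then gives
\begin{equation*}
\|\bar e_x(t)\|\le c_1e^{-c_2t}\|\bar e_x(0)\|+\tfrac{c_1}{c_2}\|\mathcal{L}\|\,\|\mathbf{B}_\theta\|\sup_{0\le\tau\le t}\|\bar\theta^\ast(\tau)\|,
\end{equation*}
which is \eqref{eq:ISS-bound} with $c_3=c_1\|\mathcal{L}\|\|\mathbf{B}_\theta\|/c_2$. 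When $\bar\theta^\ast\equiv0$, only the first term remains, which gives the exponential convergence. All constants depend only on $\mathbb{L}$, $\mathbf{R}$ and $\mathbf{B}_\theta$, so they are uniform in the initial condition, as ISS requires.
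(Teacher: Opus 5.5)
\textbf{The gap is in Step 2.} The condition $\mathbf{A}_{cl}\bar x_0=0$ is not a reading of ``corresponding to the setpoint.'' It is an extra hypothesis, and a degenerate one. Your own Step 3 shows that $\mathbf{A}_{cl}=\mathbf{A}+\mathbf{B}\mathbf{K}$ is Hurwitz, hence nonsingular, so $\mathbf{A}_{cl}\bar x_0=0$ forces $\bar x_0=0$.

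The form $\bar x_0=\mathbf{1}_m\otimes x_0$ used in Step 1 is also not given. If you drop it and work with $\bar x_0':=\mathcal{L}^{-1}(\mathbb{A}_0\otimes I_{n_x})\bar x_0$, the requirement becomes $(\mathbb{A}_0\otimes I_{n_x})\bar x_0=0$. Either way, your argument proves the theorem only for the zero setpoint. There $\bar e_x=\mathcal{L}\bar x$, and the claim reduces to ISS of the inner loop transported by $\mathcal{L}$.

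You are right that nothing more is possible, and you should state this explicitly rather than build it into the hypothesis. Treat $\bar\theta^\ast$ as the input, as the statement does, and take $\bar\theta^\ast\equiv0$ and $\bar x(0)=\bar x_0'\neq0$. Then $\bar e_x(0)=0$, but $\bar e_x(t)=\mathcal{L}(e^{\mathbf{A}_{cl}t}-I)\bar x_0'\neq0$ for every $t>0$, and $\bar e_x(t)\to-(\mathbb{A}_0\otimes I_{n_x})\bar x_0\neq0$. So both \eqref{eq:ISS-bound} and the final sentence fail for any nonzero setpoint.

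Steps 1 and 3 are otherwise sound. One small correction: $\mathbf{A}_{cl}\mathbf{R}+\mathbf{R}\mathbf{A}_{cl}^\top\prec-\mathbf{R}^2$ comes from the Schur complement of the leading $2\times2$ block of $\Lambda$, not from its $(1,1)$ block alone.

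\textbf{Comparison with the paper.} The paper's proof meets the same obstacle and does not overcome it either. It keeps the drift $\bar\phi_0=\mathcal{L}\mathbf{A}_{cl}\mathcal{L}^{-1}(\mathbb{A}_0\otimes I_{n_x})\bar x_0$ and shifts to the equilibrium $\bar e^\star$ of $\Phi\bar e+\bar\phi_0=0$, where $\Phi$ is your $M$. It then proves ISS of $\tilde{\bar e}=\bar e_x-\bar e^\star$ from a fresh Lyapunov equation $\Phi^\top\mathbf{P}_e+\mathbf{P}_e\Phi=-\mathbf{Q}$ and Young's inequality. Your route, transporting the LMI certificate $\mathbf{R}^{-1}$ through $\mathcal{L}$ and using variation of constants, reaches the same kind of estimate more directly, with constants read off the LMI solution rather than an auxiliary Lyapunov equation.

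The paper then asserts two things, and neither holds:
\begin{enumerate}
\item That the constant $\bar e^\star$ can be absorbed into \eqref{eq:ISS-bound} ``with new constants.'' In fact $\bar e^\star=-\Phi^{-1}\bar\phi_0=-(\mathbb{A}_0\otimes I_{n_x})\bar x_0$, which corresponds to $\bar x=0$.
\item That the integral action makes $\bar e^\star=0$. The integral term $\mathbf{K}_\ell\bar\xi$ sits inside $\bar\theta^\ast$, so it is absent precisely when $\bar\theta^\ast\equiv0$, as you observed yourself.
\end{enumerate}

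What can honestly be stated is one of the following. First, ISS of $\tilde{\bar e}$, equivalently of $\mathcal{L}\bar x$. Second, a bound on $\|\bar e_x\|$ with an additional constant term proportional to $\|(\mathbb{A}_0\otimes I_{n_x})\bar x_0\|$. Genuine tracking of a nonzero setpoint requires keeping $\dot{\bar\xi}=\bar e$ as part of the state and analysing the augmented $(\bar x,\bar\xi)$ loop about its equilibrium. Theorems~\ref{thm:observer-Hinf}--\ref{thm:network-Hinf} do not certify the stability of that loop.
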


\begin{remark}\label{rem:time-varying-x0}
In Theorem~\ref{thm:ISS-error}, we assumed a constant reference $\bar x_0$, so that $\bar \phi_0$ is constant and can be absorbed into the equilibrium $\bar e^\star$. 
If the reference $\bar x_0$ is time--varying, then differentiating
\eqref{eq:FTC:coop-error-output} gives
\begin{equation*}
    \dot{\bar e}_x
    = \Phi \bar e_x
    + \bar\phi_0
    + \mathbf{B}_\phi \bar\theta^\ast
    - (\mathbb{A}_0\otimes I_{n_x})\dot{\bar x}_0,
\end{equation*}
where
$\bar\phi_0 =
(\mathbb{L}\otimes I_{n_x})(\mathbf{A}+\mathbf{B}\mathbf{K})
(\mathbb{L}\otimes I_{n_x})^{-1}
(\mathbb{A}_0\otimes I_{n_x})\bar x_0 .$
Thus, $\bar e_x$ is driven by two inputs: the disturbance $\bar\theta^\ast$ and the reference rate $\dot{\bar x}_0$. Using the Lyapunov function $V(\bar e_x) = \bar e_x^{\top}\mathbf{P}_e\bar e_x$ with $\mathbf{P}_e\succ 0$, and the same arguments as in the proof of Theorem~\ref{thm:ISS-error}, one obtains an estimate of the form
\begin{equation}\label{eq:ISS-e}
    \dot V
    \le
    -\alpha^\ast \|\bar e_x\|^2
    + \beta_1^\ast \|\bar\theta^\ast\|^2
    + \beta_2^\ast \|\dot{\bar x}_0\|^2,
\end{equation}
for some $\alpha^\ast,\beta_1^\ast,\beta_2^\ast>0$. 
This inequality is an input--to--state stability (ISS) of $\bar e_x$ with respect to the inputs $[\bar\theta^{\ast\top},\dot{\bar x}_0^\top]^\top$.
\end{remark}

Remark~\ref{rem:time-varying-x0} establishes ISS of the cooperative \emph{state}
error under disturbances and time--varying references. Since the regulated
outputs are linear functions of the states, this result can be directly
translated into cooperative tracking guarantees at the output level, which is
the quantity of practical interest in many applications. This is formalized in
the following corollary.

\begin{corollary}\label{cor:output-tracking}
Under the conditions of Theorem~\ref{thm:ISS-error}, define the cooperative
output error
\begin{equation*}
    \bar e_y
    :=
    (\mathbb{L}\otimes I_{n_y})\bar y
    - (\mathbb{A}_0\otimes I_{n_y})\bar y_0,
\end{equation*}
where $\bar y = \mathbf{C}\bar x$ and $\mathbf{C}=\diag\{C_1,\dots,C_m\}$.
Then $\bar e_y$ is ISS with respect to $\bar\theta^\ast$. In particular, there exist constants $\tilde c_1,\tilde c_2,\tilde c_3>0$ such that
\begin{equation*}
    \|\bar e_y(t)\|
    \le
    \tilde c_1 e^{-\tilde c_2 t}\|\bar e_y(0)\|
    + \tilde c_3 \sup_{0\le \tau\le t}\|\bar\theta^\ast(\tau)\|.
\end{equation*}
If $\bar\theta^\ast\equiv 0$, then $\bar e_y(t)\to 0$ exponentially. Consequently, by Proposition~\ref{prop:e-zero-implies-consensus}, the network achieves cooperative output consensus tracking of the setpoint.
\end{corollary}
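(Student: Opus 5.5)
The plan is to reduce the output statement to Theorem~\ref{thm:ISS-error} by writing $\bar e_y$ as a linear image of $\bar e_x$. Then I would convert the transient term of \eqref{eq:ISS-bound} into one involving $\|\bar e_y(0)\|$, using the freedom in the choice of the constant reference state $\bar x_0$. Finally, Proposition~\ref{prop:e-zero-implies-consensus} gives the consensus conclusion.

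\textbf{Step 1 (algebraic link).} I take $\bar x_0=\mathbf 1_m\otimes x_0$ with $\bar y_0=\mathbf C\bar x_0$, i.e. $C_ix_0=y_0$ for each $i$. This is the sense in which $\bar x_0$ ``corresponds to'' $\bar y_0$ in Theorem~\ref{thm:ISS-error}. The mixed-product rule gives $(\mathbb L\otimes I_{n_y})(I_m\otimes C_\star)=(I_m\otimes C_\star)(\mathbb L\otimes I_{n_x})$ when the output maps coincide, $C_i=C_\star$. In that case
\begin{equation*}
\bar e_y=(\mathbb L\otimes I_{n_y})\mathbf C\bar x-(\mathbb A_0\otimes I_{n_y})\mathbf C\bar x_0=\mathbf C\,\bar e_x .
\end{equation*}
In the heterogeneous case I would instead work with the weighted combination $\mathbf C(\mathbb L\otimes I_{n_x})-(\mathbb L\otimes I_{n_y})\mathbf C$. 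I would absorb it into a modified constant offset, which is again permissible because the proof of Theorem~\ref{thm:ISS-error} absorbs constant terms into the equilibrium. Either way, $\|\bar e_y(t)\|\le\|\mathbf C\|\,\|\bar e_x(t)\|$, up to a constant absorbed at equilibrium.

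\textbf{Step 2 (ISS transfer).} Substituting into \eqref{eq:ISS-bound} gives
\begin{equation*}
\|\bar e_y(t)\|\le\|\mathbf C\|c_1e^{-c_2t}\|\bar e_x(0)\|+\|\mathbf C\|c_3\sup_{0\le\tau\le t}\|\bar\theta^\ast(\tau)\|.
\end{equation*}
I set $\tilde c_2=c_2$ and $\tilde c_3=\|\mathbf C\|c_3$.

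\textbf{Step 3 (the main obstacle): replacing $\|\bar e_x(0)\|$ by $\|\bar e_y(0)\|$.} Since $\mathbf C$ need not be injective, this does not follow from the bound above alone. Decompose $\bar e_x(0)=\bar e_x^{\parallel}(0)+\bar e_x^{\perp}(0)$, with $\bar e_x^{\perp}(0)\in\ker\mathbf C$ and $\bar e_x^{\parallel}(0)=\mathbf C^{+}\bar e_y(0)$. This gives $\|\bar e_x^{\parallel}(0)\|\le\|\mathbf C^{+}\|\,\|\bar e_y(0)\|$.

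For the kernel component, I would first try the freedom in $\bar x_0$. Any $x_0$ with $C_\star x_0=y_0$ is admissible, and $\bar e_x(0)$ depends affinely on it through $(\mathbb A_0\otimes I_{n_x})\bar x_0$. Because $\mathbb L$ is invertible (Remark~\ref{rem:threshold}), I would then pick $\bar x(0)$-consistent references so that the $\ker\mathbf C$ part is cancelled. If that cancellation is only partial, the fallback is to use observability of $(A_i,C_i)$. Observability makes the unobservable kernel part of $\bar e_x$ invisible in $\bar e_y$, so its contribution to $\mathbf C\bar e_x(t)$ can be bounded through the output. Along this route I would bound $\|\mathbf C e^{\Phi t}\bar e_x^{\perp}(0)\|$ by a decaying term controlled by $\|\bar e_y(0)\|$. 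Either route yields $\tilde c_1=\|\mathbf C\|c_1\|\mathbf C^{+}\|$, possibly enlarged.

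\textbf{Step 4 (consequences).} If $\bar\theta^\ast\equiv0$, the bound decays exponentially, so $\bar e_y\to0$. Since $\bar e_y$ has exactly the structure of $\bar e$ in \eqref{eq:FTC:coop-error} with $\hat{\bar y}$ replaced by $\bar y$, Proposition~\ref{prop:e-zero-implies-consensus} applies under \eqref{eq:ComNet:balance} and Remark~\ref{rem:threshold}. It shows that the limit $\bar e_y=0$ forces $y_i=y_0$ for all $i$, which is the claimed output consensus tracking.
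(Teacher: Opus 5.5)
Your Step~3 is a genuine gap, and it cannot be closed. With $\|\bar e_y(0)\|$ in the transient term, the estimate is false in general once $\ker\mathbf{C}\neq\{0\}$. This already happens for the paper's own DC-motor network ($C_i=[1,\,0]$, $n_y=1<n_x=2$). Take $y_0=0$, $\bar v\equiv 0$, $\bar\xi(0)=0$, and an exactly initialized observer, so $\bar\epsilon\equiv 0$ and $\hat{\bar y}=\bar y$. Choose zero initial speeds $x_i^{(1)}(0)=0$ and nonzero initial currents $x_i^{(2)}(0)$. Then $\bar e_y(0)=\mathbb{L}\bar y(0)=0$. Since $B_i$ enters only the current equation, $\dot x_i^{(1)}=-(b_i/J)x_i^{(1)}+(M_i/J)x_i^{(2)}$ for any $\mathbf{K}$ and $\mathbf{K}_\ell$. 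Hence $\bar e_y(t)=t\,\mathbb{L}\diag\{M_1/J,\dots,M_m/J\}[x_1^{(2)}(0),\dots,x_m^{(2)}(0)]^\top+O(t^2)$, which is of exact order $t$ because $\mathbb{L}$ is invertible. Meanwhile $\bar\theta^\ast=[\mathbf{0}^\top,(\mathbf{K}_\ell\bar\xi)^\top]^\top$ with $\bar\xi(\tau)=\int_0^\tau\bar e_y(s)\,ds=O(\tau^2)$. So the right-hand side of the claimed bound is $O(t^2)$, and the bound fails for small $t>0$. The paper's proof is essentially your Steps~1--2: $\|\bar e_y\|\le\kappa_C\|\bar e_x\|$ combined with \eqref{eq:ISS-bound}. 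You were right that this only yields the transient term $\kappa_C c_1e^{-c_2t}\|\bar e_x(0)\|$, a point the paper passes over silently.

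The example also shows why both of your routes fail.
\begin{itemize}
\item \emph{Route via $\bar x_0$.} The target inequality involves only $\bar e_y$ and $\bar\theta^\ast$, neither of which depends on $\bar x_0$, so no choice of reference state can rescue it. Concretely, $\bar x_0=\mathbf{1}_m\otimes x_0$ enters $\bar e_x(0)$ only through $(\mathbb{A}_0\mathbf{1}_m)\otimes x_0$. That is $\dim\ker C_\star$ parameters, which do not even reach nodes with $w_{i0}=0$, whereas the component to be cancelled is an arbitrary element of $\mathbb{R}^m\otimes\ker C_\star$.
\item \emph{Route via observability.} Observability of the closed-loop pair $(\mathbf{C},\mathbf{A}+\mathbf{B}\mathbf{K})$, which holds in the example for every $\mathbf{K}$, works against you. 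It guarantees that a nonzero component of $\bar e_x(0)$ in $\ker\mathbf{C}$, which contributes nothing to $\bar e_y(0)$, \emph{does} produce a nonzero output later. So $\|\mathbf{C}e^{\Phi t}\bar e_x^{\perp}(0)\|$ cannot be bounded by a multiple of $\|\bar e_y(0)\|$, and $\tilde c_1=\|\mathbf{C}\|c_1\|\mathbf{C}^{+}\|$ would require $\mathbf{C}$ to be injective.
\end{itemize}

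Separately, your heterogeneous Step~1 is wrong as written. The commutator $\mathbf{C}(\mathbb{L}\otimes I_{n_x})-(\mathbb{L}\otimes I_{n_y})\mathbf{C}$ multiplies the time-varying $\bar x(t)$, so it is not a constant offset. A nonzero constant could not be absorbed anyway without destroying decay to zero. The exact link comes from substituting $\bar x=(\mathbb{L}\otimes I_{n_x})^{-1}[\bar e_x+(\mathbb{A}_0\otimes I_{n_x})\bar x_0]$ and using $\mathbb{L}\mathbf{1}_m=\mathbb{A}_0\mathbf{1}_m$. For $\bar x_0=\mathbf{1}_m\otimes x_0$ with $C_ix_0=y_0$, this gives $\bar e_y=(\mathbb{L}\otimes I_{n_y})\mathbf{C}(\mathbb{L}\otimes I_{n_x})^{-1}\bar e_x$ with no offset.

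What can actually be proved is therefore the bound with $\tilde c_1e^{-\tilde c_2t}\|\bar e_x(0)\|$ as the transient term. That version still gives exponential decay of $\bar e_y$ when $\bar\theta^\ast\equiv 0$, and your Step~4 then goes through unchanged.
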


\section{Numerical Example}
\begin{figure}[h!]
    \centering
    \scalebox{1}{{\begin{tikzpicture}
            \centering
            \Text[x=0,y=1.5]{$\mathcal{G}_{\mathrm{s}}\coloneqq$ Star};
            \Vertex[x=0,y=0,label=$0$,color=blue,opacity=0.2,size=.5]{L}
            \Vertex[x=1,y=1,label=$1$,color=red,opacity=0.2,size=.5]{1}
            \Vertex[x=-1,y=1,label=$2$,color=red,opacity=0.2,size=.5]{2}
            \Vertex[x=-1,y=-1,label=$3$,color=red,opacity=0.2,size=.5]{3}
            \Vertex[x=1,y=-1,label=$4$,color=red,opacity=0.2,size=.5]{4}
            \Edge[Direct,color=blue,label=$1.0$](L)(1)
            \Edge[Direct,color=blue,label=$1.0$](L)(2)
            \Edge[Direct,color=blue,label=$1.0$](L)(3)
            \Edge[Direct,color=blue,label=$1.0$](L)(4)
        \end{tikzpicture}}}
    \qquad
    \scalebox{1}{{\begin{tikzpicture}
            \centering
            \Text[x=0,y=1.5]{$\mathcal{G}_{\mathrm{c}}\coloneqq$ Cyclic};
            \Vertex[x=0,y=0,label=$0$,color=blue,opacity=0.2,size=.5]{L}
            \Vertex[x=1,y=1,label=$1$,color=red,opacity=0.2,size=.5]{1}
            \Vertex[x=-1,y=1,label=$2$,color=red,opacity=0.2,size=.5]{2}
            \Vertex[x=-1,y=-1,label=$3$,color=red,opacity=0.2,size=.5]{3}
            \Vertex[x=1,y=-1,label=$4$,color=red,opacity=0.2,size=.5]{4}
            \Edge[Direct,color=blue,label=$0.4$](L)(1)
            \Edge[Direct,color=blue,label=$0.4$](L)(2)
            \Edge[Direct,color=blue,label=$0.4$](L)(3)
            \Edge[Direct,color=blue,label=$0.4$](L)(4)
            \draw[<->,ultra thick, latex' -latex'] (1) -- node[midway, fill=white, inner sep=2pt]{\small $0.3$} (2);
            \draw[<->,ultra thick, latex' -latex'] (2) -- node[midway, fill=white, inner sep=2pt]{\small $0.3$} (3);
            \draw[<->,ultra thick, latex' -latex'] (3) -- node[midway, fill=white, inner sep=2pt]{\small $0.3$} (4);
            \draw[<->,ultra thick, latex' -latex'] (4) -- node[midway, fill=white, inner sep=2pt]{\small $0.3$} (1);
        \end{tikzpicture}}}
    \qquad
    \scalebox{1}{{\begin{tikzpicture}
            \centering
            \Text[x=0,y=1.5]{$\mathcal{G}_{\mathrm{p}}\coloneqq$ Path};
            \Vertex[x=0,y=0,label=$0$,color=blue,opacity=0.2,size=.5]{L}
            \Vertex[x=1,y=1,label=$1$,color=red,opacity=0.2,size=.5]{1}
            \Vertex[x=-1,y=1,label=$2$,color=red,opacity=0.2,size=.5]{2}
            \Vertex[x=-1,y=-1,label=$3$,color=red,opacity=0.2,size=.5]{3}
            \Vertex[x=1,y=-1,label=$4$,color=red,opacity=0.2,size=.5]{4}
            \Edge[Direct,color=blue,label=$1.0$](L)(1)
            \draw[<->,ultra thick, -latex'] (1) -- node[midway, fill=white, inner sep=2pt]{\small $1.0$} (2);
            \draw[<->,ultra thick, -latex'] (2) -- node[midway, fill=white, inner sep=2pt]{\small $1.0$} (3);
            \draw[<->,ultra thick, -latex'] (3) -- node[midway, fill=white, inner sep=2pt]{\small $1.0$} (4);
        \end{tikzpicture}}}
    \caption{Three network topologies $(\mathcal{G}_{\mathrm{s}}, \mathcal{G}_{\mathrm{c}}, \mathcal{G}_{\mathrm{p}})$ with weights used in the simulations.}
    \label{Fig:topology}
\end{figure}
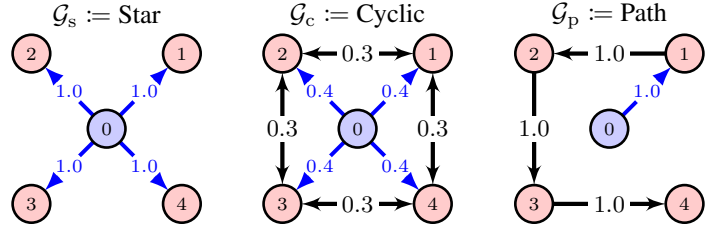

In this section, we illustrate Theorems~\ref{thm:observer-Hinf}--\ref{thm:ISS-error}
on a network of $m=4$ heterogeneous DC--motor units \cite{WAFI-JRC}. For each unit $i$,
the local dynamics follow \eqref{eq:ProbFor:state-space} with
\begin{equation*}
    A_i = \begin{bmatrix}
        -\frac{b_i}{J} & \frac{M_i}{J} \\[2pt]
        -\frac{M_i}{L_i} & -\frac{R_i}{L_i}
    \end{bmatrix}, \quad
    B_i = \begin{bmatrix}
        0 \\[2pt] \frac{1}{L_i}
    \end{bmatrix}, \quad
    D_i = \sigma(i)\begin{bmatrix}
        1 \\[3pt] 1
    \end{bmatrix},
\end{equation*}
and $C_i = [1,\,0]$, where the physical parameters are chosen as $J=0.01$,
$b_i = 0.1(1+0.1(i-1))$, $M_i = 0.01(1+0.05(i-1))$, $R_i = 1.0(1-0.02(i-1))$,
$L_i = 0.5(1+0.03(i-1))$, and $\sigma(i) = 0.1\,i$ for $i=1,\dots,m$.

Each disturbance channel is driven by a constant input $v_i(t)\equiv 0.1$, and a time-varying sensor fault of the form
\begin{equation*}
    f_{s,i}(t) =
    \begin{cases}
        0, & t \le 10\,\mathrm{s}, \\
        3 + 2\sin\big(0.2(t-10)\big), & t > 10\,\mathrm{s},
    \end{cases}
\end{equation*}
is injected at $t=10\,\mathrm{s}$.

The initial unit states are chosen as
$x_i(0)\sim\mathcal{U}([-1,1]^2)$, $i=1,\dots,m$, while the observer states are initialized at $\eta_i(0)=\mathbf{0}_{2}$, yielding $\hat x_i(0)=\mathbf{0}_{2}$.

The observer gain $\mathbf{L}$ is obtained by solving the LMI of
Theorem~\ref{thm:observer-Hinf} with attenuation level $\delta=0.3$.
The inner feedback gain $\mathbf{K}$ is computed from
Theorem~\ref{thm:network-Hinf} using a common design parameter
$\alpha_i \equiv 0.2$.
To ensure setpoint tracking, each agent employs the distributed integral action described in the form
\[
\bar u = \mathbf{K}\mathbf{E}_1 \bar x_o - \mathbf{K}_\ell\,\bar\xi, 
\qquad \dot{\bar\xi} = \bar e,
\]
with $\mathbf{K}_\ell = \diag\{\ell_1,\dots,\ell_4\}$, $\ell_i = 90$.
This outer-loop integral correction eliminates steady-state tracking error and enables convergence to piecewise-constant setpoints across all communication topologies considered.

Figure~\ref{F3a} shows the estimation performance of the proposed observer.
The first two subplots report the state estimation errors $(x_i - \hat x_i)$
for the $m$ agents, corresponding to $x_i^{(1)}$ (motor speed) and
$x_i^{(2)}$ (armature current), respectively. The third subplot depicts the
true and estimated sensor faults.

In all cases, the state estimation errors converge rapidly to zero despite the
persistent disturbances and the time-varying sensor fault activated at
$t=10\,\mathrm{s}$. Moreover, the estimated fault closely tracks the true fault signal,
demonstrating accurate reconstruction even under time-varying conditions.
These results confirm the $\mathcal{H}_\infty$ observer performance guaranteed
by Theorem~\ref{thm:observer-Hinf} and highlight the robustness of the proposed
observer to non-constant sensor degradations.

Figures~\ref{F3b}--\ref{F3d} depict the closed-loop trajectories of $x_i^{(1)}$
for the star, cyclic, and path topologies. Each plot includes the true states
$x_i^{(1)}$, the estimates $\hat x_i^{(1)}$, the piecewise-constant setpoint,
and a vertical dashed line marking the activation of the sensor fault at
$t=10\,\mathrm{s}$. For all three topologies, the agents exhibit a short
transient following each step change and the fault activation, and then
converge to the common setpoint.

The star graph achieves the fastest settling time and smallest overshoot,
while the cyclic and path graphs show slightly slower but still well-damped
responses, in line with their weaker connectivity. The performance degradation
relative to the centralized (star) case is modest, illustrating that the
proposed fault-tolerant control architecture is robust to heterogeneity,
time-varying sensor faults, and disturbances as long as the communication
graph is connected (Remark~\ref{rem:threshold},
Proposition~\ref{prop:e-zero-implies-consensus}) and the LMI conditions of
Theorems~\ref{thm:observer-Hinf}--\ref{thm:network-Hinf} are satisfied.

\begin{figure*}[t!]
    \centering
    \subfloat[\label{F3a} Error]{\includegraphics[width=.24\linewidth]{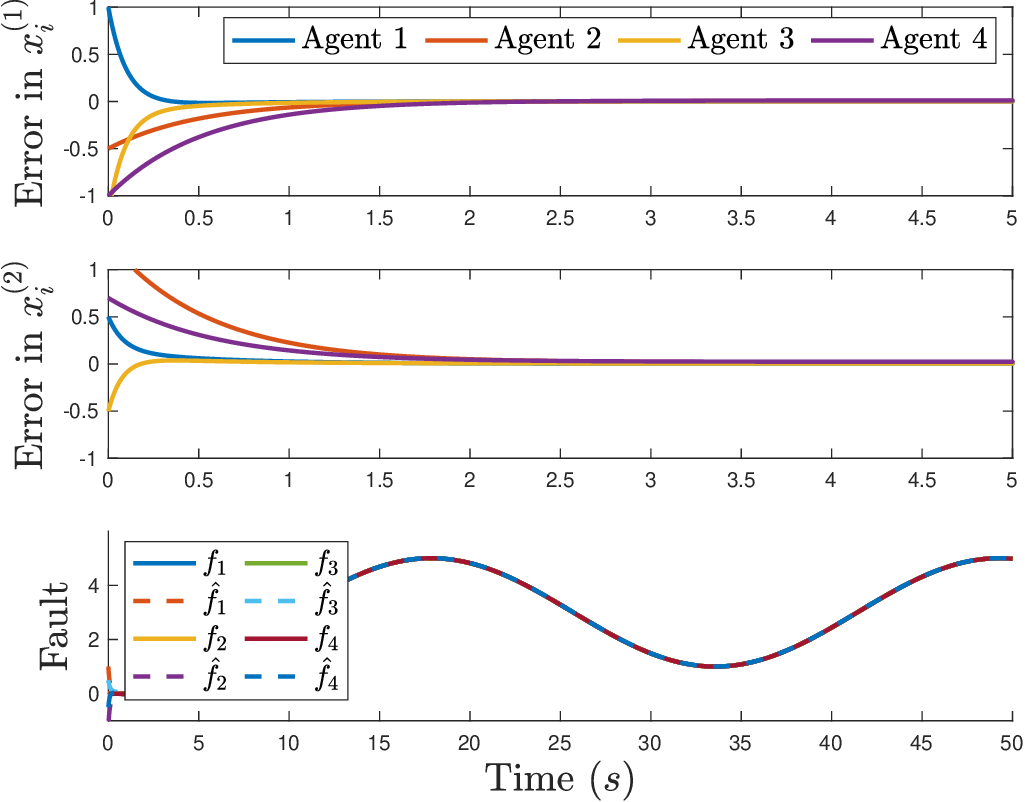}}
    \subfloat[\label{F3b} Star--$\mathcal{G}_{\mathrm{s}}$]{\includegraphics[width=.245\linewidth]{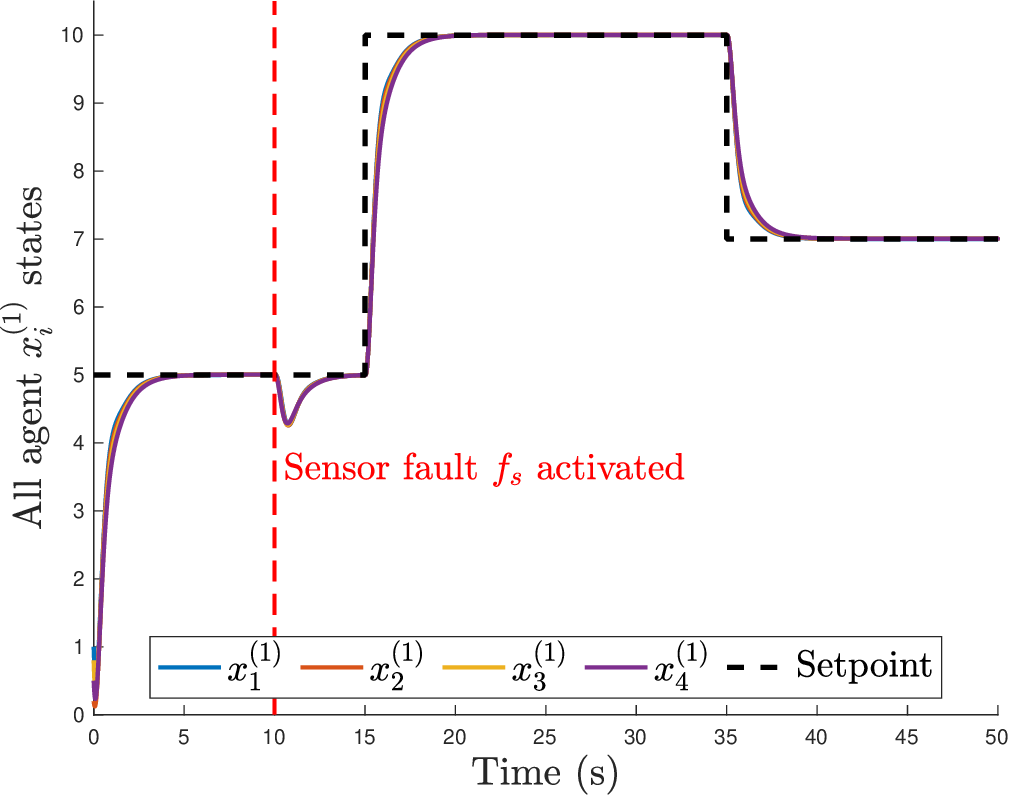}}
    \subfloat[\label{F3c} Cyclic--$\mathcal{G}_{\mathrm{c}}$]{\includegraphics[width=.245\linewidth]{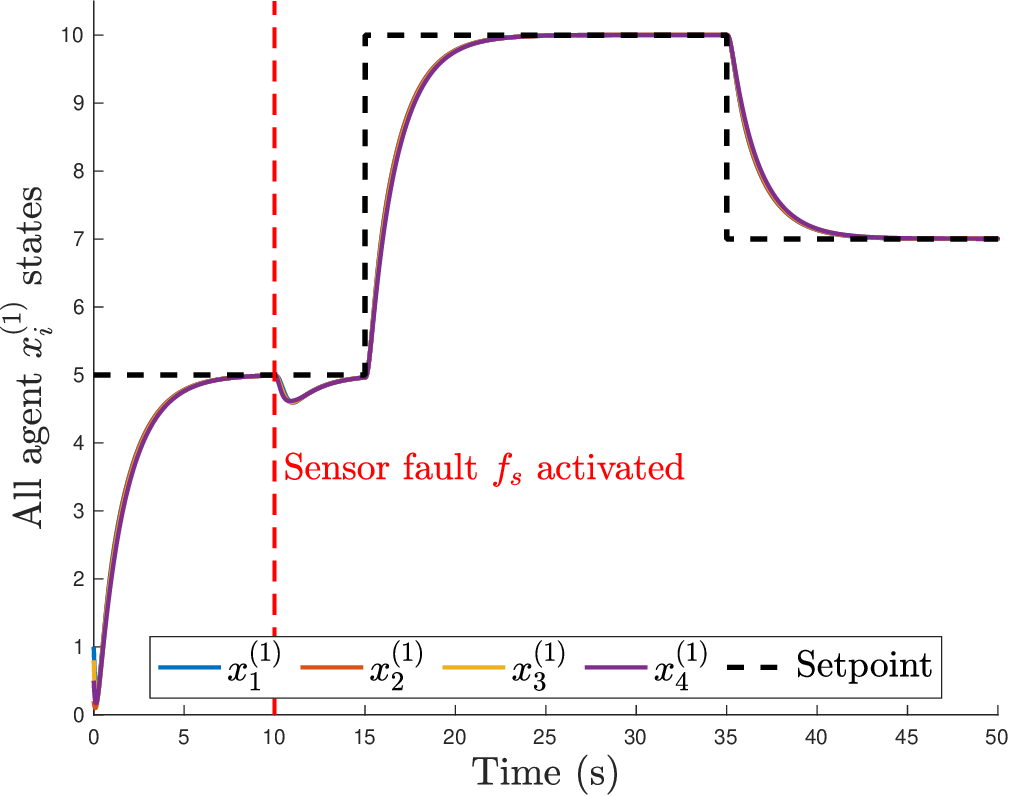}}
    \subfloat[\label{F3d} Path--$\mathcal{G}_{\mathrm{p}}$]{\includegraphics[width=.245\linewidth]{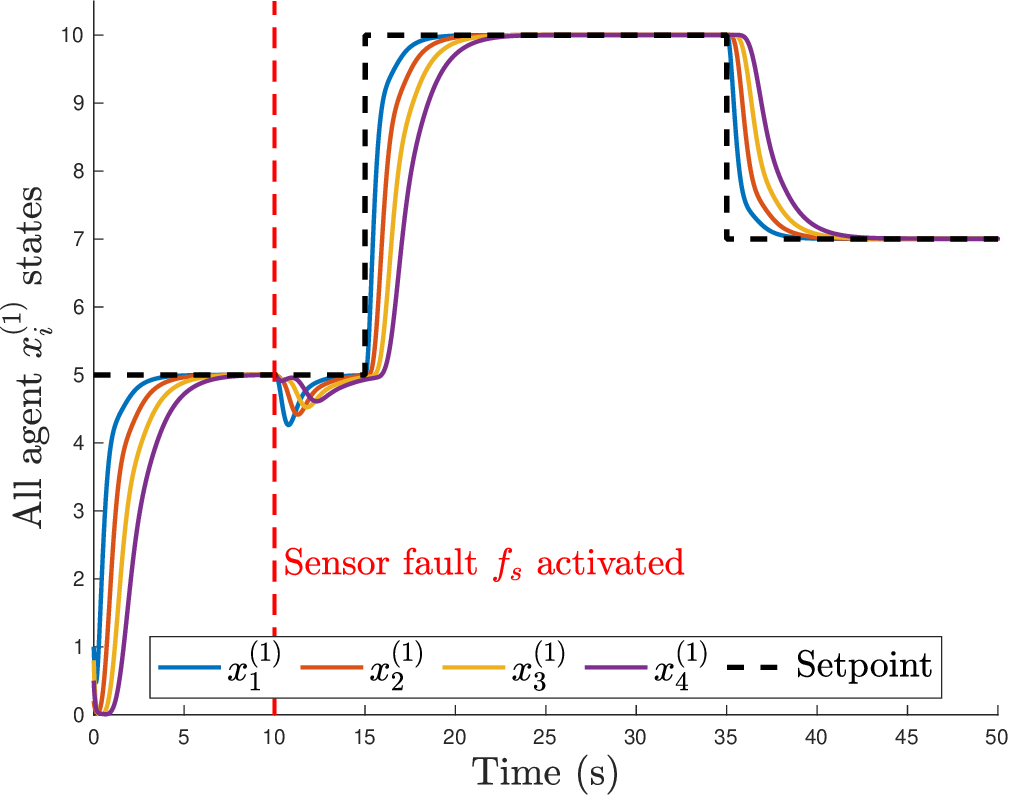}}
    \caption{Trajectories of $m = 4$ sensing nodes for three topologies, along with corresponding error norms.}
    \label{fig:topologies}
\end{figure*}

\section{Conclusion}

This paper presented a cooperative fault-tolerant tracking framework for networked linear systems subject to sensor faults and external disturbances. The proposed framework combines three complementary components:
(i) an augmented observer that reconstructs both system states and unknown sensor faults while guaranteeing an $\mathcal{H}_\infty$ disturbance-attenuation property;
(ii) an inner $\mathcal{H}_\infty$ state-feedback controller synthesized through convex LMIs to ensure robust closed-loop stability and disturbance rejection; and
(iii) an outer distributed integral action that eliminates steady-state tracking offsets and enables cooperative tracking of the network setpoint.

Using an ISS-based analysis, the cooperative tracking error was shown to remain uniformly bounded in the presence of disturbances and residual estimation uncertainty, while converging exponentially to zero in the disturbance-free case. Furthermore, Proposition~1 established that vanishing cooperative error guarantees network-wide consensus tracking of the desired setpoint. Numerical studies on heterogeneous DC-motor networks with star, cyclic, and path communication topologies demonstrated accurate state and fault estimation, robust cooperative tracking, and resilience against disturbances and time-varying sensor faults. Overall, the proposed framework provides a scalable and robust solution for coordinated control of interconnected systems operating under sensing imperfections and uncertain environments.

Future work will investigate adaptive and data-driven extensions, including feedback-dependent tuning of observer and controller gains, as well as learning-based mechanisms for improved fault estimation and uncertainty quantification from data. Additional directions include extending the framework to broader classes of nonlinear and uncertain systems, incorporating communication constraints such as delays and packet losses, and addressing more general adversarial and time-varying attack scenarios. Integration with reinforcement-learning--based decision-making and distributed optimization methods will also be explored to further enhance autonomy and performance in complex networked environments \cite{Add1,Add2,Wafi-SIAM,Add3,R17,Add4,Wafi-ACC24,Wafi-CoDIT,Add5}.

\bibliographystyle{ieeetr}
\bibliography{reference.bib}

@ARTICLE{R1,
  author={Zhang, Xuelin and Xu, Xiaobin and Li, Jianning and Ma, Feng and Zhang, Zhenjie and Brunauer, Georg and Steyskal, Felix},
  journal={IEEE Sensors Journal}, 
  title={Fault Estimation and $\mathcal{H}_{\infty}$ Fuzzy Active Fault-Tolerant Control Design for Ship Steering Autopilot Subject to Actuator and Sensor Faults}, 
  year={2023},
  volume={23},
  number={22},
  pages={28110-28119},
  doi={10.1109/JSEN.2023.3321841}}

@ARTICLE{R2,
  author={Nan, Fang and Sun, Sihao and Foehn, Philipp and Scaramuzza, Davide},
  journal={IEEE Robotics and Automation Letters}, 
  title={Nonlinear MPC for Quadrotor Fault-Tolerant Control}, 
  year={2022},
  volume={7},
  number={2},
  pages={5047-5054},
  doi={10.1109/LRA.2022.3154033}}

@INPROCEEDINGS{R3,
  author={Han, Xue and Rammal, Rim and Li, Zetao and Cabassud, Michel and Dahhou, Boutaib},
  booktitle={2021 9th International Conference on Systems and Control (ICSC)}, 
  title={Interval Observer-Based Active Fault Tolerant Control for an Intensified Heat Exchanger/Reactor}, 
  year={2021},
  volume={},
  number={},
  pages={133-138},
  doi={10.1109/ICSC50472.2021.9666503}
}

@ARTICLE{R4,
  author={Liu, Yang and Pang, Guochen and Qiu, Jianlong and Chen, Xiangyong and Cao, Jinde},
  journal={IEEE Transactions on Signal and Information Processing over Networks}, 
  title={Fault-Tolerant Control for Output Regulation in Multi-Agent Systems Based on Prescribed-Time Observers}, 
  year={2024},
  volume={10},
  number={},
  pages={729-739},
  doi={10.1109/TSIPN.2024.3458789}}

@ARTICLE{R5,
  author={Doostmohammadian, Mohammadreza and Taghieh, Amin and Zarrabi, Houman},
  journal={IEEE Transactions on Automation Science and Engineering}, 
  title={Distributed Estimation Approach for Tracking a Mobile Target via Formation of UAVs}, 
  year={2022},
  volume={19},
  number={4},
  pages={3765-3776},
  doi={10.1109/TASE.2021.3135834}}

@article{Wafi-DistEstTank,
    author = {Moh Kamalul Wafi and Bambang L. Widjiantoro},
    title = {Distributed Estimation with Decentralized Control for Quadruple-Tank Process},
    journal = {arXiv preprint arXiv:2304.04763},
    year = {2025}, 
    url={https://arxiv.org/abs/2304.04763}, 
}

@article{R8,
title = {Fault-Tolerant Control for Networked Control Systems with Imperfect Measurements*},
journal = {IFAC Proceedings Volumes},
volume = {45},
number = {20},
pages = {1329-1334},
year = {2012},
note = {8th IFAC Symposium on Fault Detection, Supervision and Safety of Technical Processes},
issn = {1474-6670},
doi = {https://doi.org/10.3182/20120829-3-MX-2028.00018},
url = {https://www.sciencedirect.com/science/article/pii/S1474667016349394},
author = {Shun Jiang and Huajing Fang}
}

@INPROCEEDINGS{R9,
  author={Pang, Zhonghua and Zhang, Ji and Zhou, Yuguo and Han, Cunwu},
  booktitle={2017 29th Chinese Control And Decision Conference (CCDC)}, 
  title={Active fault tolerant control of networked systems with sensor fault}, 
  year={2017},
  volume={},
  number={},
  pages={6468-6473},
  doi={10.1109/CCDC.2017.7978337}}

@INPROCEEDINGS{R10,
  author={Cao, Bo and Wu, Yawei and Yao, Lina},
  booktitle={2021 CAA Symposium on Fault Detection, Supervision, and Safety for Technical Processes (SAFEPROCESS)}, 
  title={Fault diagnosis and fault-tolerant control for leader-follower multi-agent systems with time-delay}, 
  year={2021},
  volume={},
  number={},
  pages={1-8},
  doi={10.1109/SAFEPROCESS52771.2021.9693680}}

@INPROCEEDINGS{R11,
  author={Schenk, Kai and Lunze, Jan},
  booktitle={2017 IEEE 56th Annual Conference on Decision and Control (CDC)}, 
  title={Fault-tolerant control in networked systems: A two-layer approach}, 
  year={2017},
  volume={},
  number={},
  pages={6370-6376},
  doi={10.1109/CDC.2017.8264620}}

@INPROCEEDINGS{R12,
  author={Raimondo, Davide M. and Boem, Francesca and Gallo, Alexander and Parisini, Thomas},
  booktitle={2016 IEEE 55th Conference on Decision and Control (CDC)}, 
  title={A decentralized fault-tolerant control scheme based on Active Fault Diagnosis}, 
  year={2016},
  volume={},
  number={},
  pages={2164-2169},
  doi={10.1109/CDC.2016.7798584}}

@ARTICLE{R13,
  author={Gu, Zhou and Shi, Peng and Yue, Dong and Yan, Shen and Xie, Xiangpeng},
  journal={IEEE Transactions on Network Science and Engineering}, 
  title={Fault Estimation and Fault-Tolerant Control for Networked Systems Based on an Adaptive Memory-Based Event-Triggered Mechanism}, 
  year={2021},
  volume={8},
  number={4},
  pages={3233-3241},
  doi={10.1109/TNSE.2021.3107935}}

@ARTICLE{R14,
  author={Lee, Tae H. and Lim, Chee Peng and Nahavandi, Saeid and Roberts, Rodney G.},
  journal={IEEE Systems Journal}, 
  title={Observer-Based $\mathcal{H}_{\infty }$ Fault-Tolerant Control for Linear Systems With Sensor and Actuator Faults}, 
  year={2019},
  volume={13},
  number={2},
  pages={1981-1990},
  doi={10.1109/JSYST.2018.2800710}}

@ARTICLE{R17,
  author={Liu, Xiaoxu and Yuan, Zike and Gao, Zhiwei and Zhang, Wenwei},
  journal={IEEE Transactions on Industrial Informatics}, 
  title={Reinforcement Learning-Based Fault-Tolerant Control for Quadrotor UAVs Under Actuator Fault}, 
  year={2024},
  volume={20},
  number={12},
  pages={13926-13935},
  doi={10.1109/TII.2024.3438241}}

@article{Wafi-Elham,
author = {Javanfar, Elham and Rahmani, Mehdi and Wafi, Moh Kamalul},
title = {Robust Estimation-Based Non-Fragile Control for Discrete-Time Non-Linear Systems},
journal = {International Journal of Robust and Nonlinear Control},
volume = {35},
number = {6},
pages = {2462-2471},
eprint = {https://onlinelibrary.wiley.com/doi/pdf/10.1002/rnc.7806},
year = {2025}
}

@article{WAFI-JRC, 
title={Non-Linear Estimation using the Weighted Average Consensus-Based Unscented Filtering for Various Vehicles Dynamics towards Autonomous Sensorless Design}, 
volume={4}, 
number={1}, 
journal={Journal of Robotics and Control (JRC)}, 
author={Widjiantoro, Bambang L. and Wafi, Moh Kamalul and Indriawati, Katherin}, 
year={2023}, 
month={Mar.}, 
pages={95–107},
url={https://journal.umy.ac.id/index.php/jrc/article/view/16164}, 
DOI={10.18196/jrc.v4i1.16164}, 
abstractNote={The concerns to autonomous vehicles have been becoming more intriguing in coping with the more environmentally dynamics non-linear systems under some constraints and disturbances. These vehicles connect not only to the self-instruments yet to the neighborhoods components, making the diverse interconnected communications which should be handled locally to ease the computation and to fasten the decision. To deal with those interconnected networks, the distributed estimation to reach the untouched states, pursuing sensorless design, is approached, initiated by the construction of the modified pseudo measurement which, due to approximation, led to the weighted average consensus calculation within unscented filtering along with the bounded estimation errors. Moreover, the tested vehicles are also associated to certain robust control scenarios subject to noise and disturbance with some stability analysis to ensure the usage of the proposed estimation algorithm. The numerical instances are presented along with the performances of the control and estimation method. The results affirms the effectiveness of the method with limited error deviation compared to the other centralized and distributed filtering. Beyond these, the further research would be the directed sensorless design and fault-tolerant learning control subject to faults to negate the failures.}, 
}

@INPROCEEDINGS{Wafi-ACC24,
  author={Wafi, Moh Kamalul and Siami, Milad and Sznaier, Mario},
  booktitle={2024 American Control Conference (ACC)}, 
  title={Investigating the Effectiveness of Reinforcement Learning in Closed-Loop Systems with Time Delays}, 
  year={2024},
  volume={},
  number={},
  pages={4149-4154},
  doi={10.23919/ACC60939.2024.10644470}
}

@INPROCEEDINGS{Wafi-CoDIT,
  author={Wafi, Moh Kamalul and Hajian, Rozhin and Shafai, Bahram and Siami, Milad},
  booktitle={2023 9th International Conference on Control, Decision and Information Technologies (CoDIT)}, 
  title={Advancing Fault-Tolerant Learning-Oriented Control for Unmanned Aerial Systems}, 
  year={2023},
  volume={},
  number={},
  pages={1688-1693},
  doi={10.1109/CoDIT58514.2023.10284443}}

@INPROCEEDINGS{Wafi-SIAM,
author = {Moh. Kamalul Wafi and Milad Siami},
title = {A Comparative Analysis of Reinforcement Learning and Adaptive Control Techniques for Linear Uncertain Systems},
booktitle = {2023 Proceedings of the Conference on Control and its Applications (CT)},
chapter = {},
year={2023},
pages = {25-32},
doi = {10.1137/1.9781611977745.4},
eprint = {https://epubs.siam.org/doi/pdf/10.1137/1.9781611977745.4}
}

@ARTICLE{Wafi-AIP,
author = {Wafi, Moh Kamalul},
title = {Filtering module on satellite tracking},
journal = {AIP Conference Proceedings},
volume = {2088},
number = {1},
pages = {020045},
year = {2019},
doi = {10.1063/1.5095297},
URL = {https://aip.scitation.org/doi/abs/10.1063/1.5095297}}

@ARTICLE{Add1,
  author={Fahad, Shah and She, Buxin and Yin, Junjie and Li, Fangxing and Cui, Hantao and Bo, Rui},
  journal={IEEE Transactions on Power Delivery}, 
  title={A Data-Driven Adaptive Control Approach for Enhancing the Dynamic Response of VSGs in Varying Grid Conditions}, 
  year={2025},
  volume={40},
  number={3},
  pages={1421-1433},
  doi={10.1109/TPWRD.2025.3557059}}

@ARTICLE{Add2,
  author={Ruchun, Wen},
  journal={IEEE Access}, 
  title={Adaptive Safe Data Driven Control Strategy for Closed Loop System}, 
  year={2025},
  volume={13},
  number={},
  pages={95876-95887},
  doi={10.1109/ACCESS.2025.3573520}}

@ARTICLE{Add3,
  author={Pinthurat, Watcharakorn and Kongsuk, Prayad and Surinkaew, Tossaporn and Marungsri, Boonruang},
  journal={IEEE Access}, 
  title={An Adaptive Data-Driven-Based Control for Voltage Control Loop of Grid-Forming Converters in Variable Inertia MGs}, 
  year={2024},
  volume={12},
  number={},
  pages={58143-58155},
  doi={10.1109/ACCESS.2024.3392295}}

@ARTICLE{Add4,
  author={Li, Mengshi and Zhang, Huanming and Ji, Tianyao and Wu, Q. H.},
  journal={CSEE Journal of Power and Energy Systems}, 
  title={Fault Identification in Power Network Based on Deep Reinforcement Learning}, 
  year={2022},
  volume={8},
  number={3},
  pages={721-731},
  doi={10.17775/CSEEJPES.2020.04520}}

@ARTICLE{Add5,
  author={Muttaki, Md Rafid and Rahman, Md Habibur and Kulkarni, Akshay and Tehranipoor, Mark and Farahmandi, Farimah},
  journal={IEEE Transactions on Very Large Scale Integration (VLSI) Systems}, 
  title={FTC: A Universal Framework for Fault-Injection Attack Detection and Prevention}, 
  year={2024},
  volume={32},
  number={7},
  pages={1311-1324},
  doi={10.1109/TVLSI.2024.3384531}}

\appendix

\begin{proof}[Proof of Proposition~\ref{prop:e-zero-implies-consensus}]
    If $\hat y_i=y_0, \forall i$ then $\hat{\bar y} = \bar y_0$ and hence $(\mathbb{L}\otimes I_{n_y})\hat{\bar y} = (\mathbb{L}\otimes I_{n_y}) \bar y_0 = (\mathbb{A}_0\otimes I_{n_y})\bar y_0$, implying $\bar e=\mathbf{0}_{mn_y}$ by definition.
    Conversely, assume $\bar e=\mathbf{0}_{mn_y}$ and the condition of Remark~\ref{rem:threshold} hold. We show that this implies $\hat y_i=y_0$ for all $i$. 
    Assume $n_y=1$ and $\bar e=\mathbf{0}_{m}$. Then, $\hat{\bar y} = \mathbb{A}_m\hat{\bar y} + \mathbb{A}_0\bar y_0$. Notice that by \eqref{eq:ComNet:balance}, for each $i\in\{1,\dots,m\}$, we have
    $\sum_{j\in\mathcal{N}_i} w_{ij}+w_{i0}=1$ and $w_{ij},w_{i0}\ge 0$, so
    $\hat y_i$ is a convex combination of $\{\hat y_j : j\in\mathcal{N}_i\}\cup\{y_0\}$.
    
    Let $\tilde{y}_i := \hat y_i - y_0$ and 
    $|\tilde{y}_{i^\star}| = \max_{1\le i\le m} |\tilde{y}_i|$. Subtracting $y_0$ from $\hat y_i = \sum_{j\in\mathcal{N}_i} w_{ij} \hat y_j + w_{i0} y_0$ results in the following $\tilde{y}_i = \sum_{j\in\mathcal{N}_i} w_{ij} \tilde{y}_j$.
    Taking absolute values and using $|\tilde{y}_j|\le |\tilde{y}_{i^\star}|$,
    \begin{equation*}
        |\tilde{y}_{i^\star}|
        \le \sum\nolimits_j w_{i^\star j} |\tilde{y}_j|
        \le |\tilde{y}_{i^\star}|\sum\nolimits_j w_{i^\star j}
        \le |\tilde{y}_{i^\star}|,
    \end{equation*}
    which implies $|\tilde{y}_j| = |\tilde{y}_{i^\star}|$ whenever $w_{i^\star j}>0$ and the signs of $\tilde{y}_j$ and $\tilde{y}_{i^\star}$ coincide. Since at least one node has $w_{i0}>0$ and every node is reachable from $0$ by Remark~\ref{rem:threshold}, we obtain $\tilde{y}_i = 0$ for that node and hence $\tilde{y}_i=0$ for all $i$, i.e., $\hat y_i = y_0$ for all $i$.  In particular $\hat y_i=\hat y_j$ for all $i,j$. For vector outputs $\hat y_i\in\mathbb{R}^{n_y}$, the same argument applies which yields $\hat y_i = y_0$ for all $i$ and
    completes the proof.
\end{proof} 

\begin{proof}[Proof of Theorem~\ref{thm:observer-Hinf}]
Let the Lyapunov function be $V(t) = \bar\epsilon^{\top}(t)\mathbf{P}\bar\epsilon(t)$, with $\mathbf{P}\succ 0$. Using \eqref{eq:FTC:estimation-error}, then
\begin{align*}
    \dot V
    &= \bar\epsilon^{\top}
       \big[\mathbf{P}(\mathbf{F}_1\mathbf{A}_a - \mathbf{L}\mathbf{E}_2)
          + (\mathbf{F}_1\mathbf{A}_a - \mathbf{L}\mathbf{E}_2)^{\top}\mathbf{P}
       \big]\bar\epsilon + 2\,\bar\epsilon^{\top}\mathbf{P}\mathbf{F}_1\mathbf{D}\,\bar v.
\end{align*}
Introduce $\mathbf{H} := \mathbf{P}\mathbf{L}$ to eliminate the bilinear term in $\mathbf{P}$ and $\mathbf{L}$, so that $\mathbf{L} = \mathbf{P}^{-1}\mathbf{H}$. 
By adding and subtracting $\bar\epsilon^{\top}\bar\epsilon$ and $\delta^{2}\bar v^{\top}\bar v$, we obtain
\begin{equation} \label{eq:FTC:thm1-LMI}
    \dot V + \bar\epsilon^{\top}\bar\epsilon - \delta^{2}\bar v^{\top}\bar v
    =
    \begin{bmatrix}
        \bar\epsilon \\[2pt] \bar v
    \end{bmatrix}^{\!\top}
    \begin{bmatrix}
        \Delta & \mathbf{P}\mathbf{F}_1\mathbf{D} \\
        *                     & -\delta^{2}I_{\bar n_v}
    \end{bmatrix}
    \begin{bmatrix}
        \bar\epsilon \\[2pt] \bar v
    \end{bmatrix}.
\end{equation}
where $\Delta = \mathbf{P}\mathbf{F}_1\mathbf{A}_a + \mathbf{A}_a^{\top}\mathbf{F}_1^{\top}\mathbf{P} - \mathbf{H}\mathbf{E}_2 - \mathbf{E}_2^{\top}\mathbf{H}^{\top} + I_{\bar n_x + \bar n_y}$.
Thus, the LMI condition $\Pi<0$ in \eqref{eq:FTC:LMI-observer} is equivalent to
\begin{equation*}
    \dot V + \|\bar\epsilon\|^{2} - \delta^{2}\|\bar v\|^{2} < 0
    \quad\text{for all } (\bar\epsilon,\bar v)\neq 0.
\end{equation*}
Equivalently, $\dot V \le -\|\bar\epsilon\|^{2} + \delta^{2}\|\bar v\|^{2}$. Two conclusions follow immediately: 
(i) for $\bar v(t)\equiv 0$ we achieve $\dot V \le -\|\bar\epsilon\|^{2}\le 0$. Since $V$ is positive definite and radially unbounded in $\bar\epsilon$, this implies that $\bar\epsilon(t)\to 0$ as $t\to\infty$, establishing asymptotic stability; (ii) the inequality $\dot V \le -\|\bar\epsilon\|^{2} + \delta^{2}\|\bar v\|^{2}$ is the standard dissipation inequality with storage function $V$ and supply rate $s(\bar v,\bar\epsilon)
= \delta^{2}\|\bar v\|^{2} - \|\bar\epsilon\|^{2}$. By the bounded–real lemma, this is equivalent to the error system having $\mathcal{H}_{\infty}$ gain (from $\bar v$ to $\bar\epsilon$) strictly less than~$\delta$, completing the proof.
\end{proof}

\begin{proof}[Proof of Theorem~\ref{thm:network-Hinf}]
Let $\mathbf{Q}\succ 0$ be the Lyapunov matrix and consider the quadratic storage function
$V(\bar x) = \bar x^\top\mathbf{Q}\bar x$.
Let the signal $\bar \vartheta\coloneqq\mathbf{K}\mathbf{E}_1 \bar \epsilon$ enter the performance inequality with weight
$\alpha>0$. The derivative of $V$ along \eqref{eq:FTC:closed-loop} is 
\begin{equation*}
    \dot V
    =
    \bar x^\top
    \big(
        \mathbf{Q}(\mathbf{A}+\mathbf{B}\mathbf{K})
      + (\mathbf{A}+\mathbf{B}\mathbf{K})^\top\mathbf{Q}
    \big)\bar x
    + 2\bar x^\top\mathbf{Q}\mathbf{B}_\theta\bar\theta.
\end{equation*}
To enforce the dissipation inequality
\begin{equation*}
    \dot V + \bar x^\top\bar x
        - \alpha\,\bar \vartheta^\top\bar \vartheta
        - \delta^2 \bar v^\top\bar v < 0, \quad \forall(\bar x,\bar \vartheta,\bar v)\neq 0
\end{equation*}
define $\hat\Delta \coloneqq \mathbf{Q}(\mathbf{A}+\mathbf{B}\mathbf{K}) + (\mathbf{A}+\mathbf{B}\mathbf{K})^\top\mathbf{Q} + I_{\bar n_x}$. Then the above inequality is equivalent to
\begin{equation}
    \label{eq:LMI-QK-appendix}
    \begin{bmatrix}
        \bar x \\[1mm] \bar \vartheta \\[1mm] \bar v
    \end{bmatrix}^\top
    \begin{bmatrix}
        \hat\Delta     & -\mathbf{Q}\mathbf{B} & \mathbf{Q}\mathbf{D} \\
        *              & -\alpha I_{\bar n_u}            & 0          \\
        *              & *                    & -\delta^2 I_{\bar n_v}
    \end{bmatrix}
    \begin{bmatrix}
        \bar x \\[1mm] \bar \vartheta \\[1mm] \bar v
    \end{bmatrix}
    < 0,
\end{equation}
for all nonzero $(\bar x,\bar \vartheta,\bar v)$.
Integrating the dissipation inequality yields $\|\bar x\|_2^2 < \alpha\|\bar\vartheta\|_2^2 + \delta^2\|\bar v\|_2^2,$
which establishes the weighted $\mathcal{H}_\infty$ performance.
By the bounded real lemma, \eqref{eq:LMI-QK-appendix} implies that the closed--loop
system is asymptotically stable. The bound $\|\bar x\|_{2} < \gamma\|\bar v\|_{2}$
then follows from Remark~\ref{rem:gamma-delta}.

To obtain an LMI that is affine in the decision variables, introduce the
change of variables $\mathbf{R} \coloneqq \mathbf{Q}^{-1}$ and $\mathbf{G} \coloneqq \mathbf{K}\mathbf{R}$.
Pre- and post--multiplying \eqref{eq:LMI-QK-appendix} by
$\diag\{\mathbf{R}, I_{\bar n_u}, I_{\bar n_v}\}$ and its transpose yields the equivalent
matrix inequality
\begin{equation*}
    \begin{bmatrix}
        \tilde\Delta & -\mathbf{B} & \mathbf{D}    \\
        *            & -\alpha I_{\bar n_u}   & 0  \\
        *            & *          & -\delta^{2} I_{\bar n_v} 
    \end{bmatrix} < 0,
\end{equation*}
where $\tilde\Delta = \mathbf{A}\mathbf{R} + \mathbf{R}\mathbf{A}^\top + \mathbf{B}\mathbf{G} + \mathbf{G}^\top\mathbf{B}^\top + \mathbf{R}\mathbf{R}$.

The term $\mathbf{R}\mathbf{R}$ is quadratic in the decision variable
$\mathbf{R}$.  To obtain an LMI that is affine in $(\mathbf{R},\mathbf{G})$ we
apply a Schur complement. Writing $\tilde\Delta = \bar\Delta + \mathbf{R}\mathbf{R}$ where $\bar\Delta \coloneqq \mathbf{A}\mathbf{R} + \mathbf{R}\mathbf{A}^\top + \mathbf{B}\mathbf{G} + \mathbf{G}^\top\mathbf{B}^\top$,
and using the equivalence
\begin{equation*}
    \bar\Delta + \mathbf{R}\mathbf{R} < 0
    \;\Longleftrightarrow\;
    \begin{bmatrix}
        \bar\Delta & \mathbf{R} \\
        *          & -I_{\bar n_x} 
    \end{bmatrix} < 0,
\end{equation*}
we arrive at the LMI \eqref{eq:FTC:LMI-control-4x4}, which is affine in $(\mathbf{R},\mathbf{G})$ and thus convex. Therefore, any pair $(\mathbf{R},\mathbf{G})$ satisfying \eqref{eq:FTC:LMI-control-4x4} defines, via $\mathbf{K}=\mathbf{G}\mathbf{R}^{-1}$, a feedback law that stabilizes the closed--loop system and achieves the desired weighted $\mathcal{H}_\infty$ bound.
\end{proof}

\begin{proof}[Proof of Theorem~\ref{thm:ISS-error}]
Here $\mathbb{L}$ is the augmented Laplacian and is invertible under Remark~\ref{rem:threshold} and $\bar x_0$ is constant. Differentiating \eqref{eq:FTC:coop-error-output} and using
$\bar x = (\mathbb{L}\otimes I_{n_x})^{-1} \big[\bar e_x + (\mathbb{A}_0\otimes I_{n_x})\bar x_0 \big]$ (obtained from rearrangement of \eqref{eq:FTC:coop-error-output}) gives
\begin{equation} \label{eq:FTC:thm3:error-dynamics}
    \begin{aligned}
    \dot{\bar e}_x &=
    (\mathbb{L}\otimes I_{n_x})(\mathbf{A}+\mathbf{B}\mathbf{K})\bar x
    + (\mathbb{L}\otimes I_{n_x})\mathbf{B}_\theta \bar\theta^\ast \\
    &= \Phi \bar e_x + \bar \phi_0 + \mathbf{B}_{\phi}\bar\theta^\ast,
    \end{aligned}
\end{equation}
where $\Phi \coloneqq (\mathbb{L}\otimes I_{n_x})(\mathbf{A}+\mathbf{B}\mathbf{K})(\mathbb{L}\otimes I_{n_x})^{-1}$, $\bar \phi_0 \coloneqq (\mathbb{L}\otimes I_{n_x})(\mathbf{A}+\mathbf{B}\mathbf{K})(\mathbb{L}\otimes I_{n_x})^{-1}(\mathbb{A}_0\otimes I_{n_x})\bar x_0$, and $\mathbf{B}_\phi \coloneqq (\mathbb{L}\otimes I_{n_x})\mathbf{B}_\theta$. Since Theorem~\ref{thm:network-Hinf} guarantees that
$\mathbf{A}+\mathbf{B}\mathbf{K}$ is Hurwitz and $\mathbb{L}\otimes I_{n_x}$ is
nonsingular, the matrix $\Phi$ is also Hurwitz (similarity transformation).

Because $\Phi$ is Hurwitz, for any chosen positive definite matrix
$\mathbf{Q}\in\mathbb{R}^{\bar n_x\times\bar n_x}$ there exists a unique
$\mathbf{P}_e\succ 0$ solving the Lyapunov equation
$\Phi^{\top}\mathbf{P}_e + \mathbf{P}_e\Phi = -\mathbf{Q}.$
Let $\bar e^\star$ be any equilibrium of \eqref{eq:FTC:thm3:error-dynamics} in the
disturbance--free case $\bar\theta^\ast\equiv 0$, i.e. a solution of $\Phi\bar e^\star + \bar \phi_0 = 0$. 
Define the shifted error variable $\tilde{\bar e} \coloneqq \bar e_x - \bar e^\star$, which measures deviation from the equilibrium.
Then $\tilde{\bar e}$ satisfies
\begin{equation}\label{eq:FTC:thm3:e-tilde}
    \dot{\tilde{\bar e}} =
    \Phi \tilde{\bar e} + \mathbf{B}_\phi\bar\theta^\ast.
\end{equation}

Consider the Lyapunov function $V(\tilde{\bar e}) = \tilde{\bar e}^{\top}\mathbf{P}_e\tilde{\bar e}$ with $\mathbf{P}_e\succ 0$.
Differentiating along trajectories of \eqref{eq:FTC:thm3:e-tilde} gives
\begin{align*}
    \dot V
    &= 
    \tilde{\bar e}^{\top}(\Phi^{\top}\mathbf{P}_e + \mathbf{P}_e\Phi)\tilde{\bar e}
    + 2\tilde{\bar e}^{\top}\mathbf{P}_e \mathbf{B}_\phi\bar\theta^\ast
    = -\tilde{\bar e}^{\top}\mathbf{Q}\tilde{\bar e}
       + 2\tilde{\bar e}^{\top}\mathbf{P}_e \mathbf{B}_\phi\bar\theta^\ast.
\end{align*}
Let $\lambda_{\min}(\mathbf{Q})$ denote the smallest eigenvalue of $\mathbf{Q}$ and define
$\kappa := \|\mathbf{P}_e \mathbf{B}_\phi\|$. Then, $\dot V \le -\lambda_{\min}(\mathbf{Q})\|\tilde{\bar e}\|^2 + 2\kappa \|\tilde{\bar e}\|\|\bar\theta^\ast\|$.
Next, applying the Young inequality $2ab\le c a^2 + c^{-1} b^2$ to $2\kappa \|\tilde{\bar e}\|\|\bar\theta^\ast\|$ with $a = \|\tilde{\bar e}\|$, $b = \kappa \|\bar\theta^\ast\|$, and $c = \lambda_{\min}(\mathbf{Q})/2$, yields
\begin{align*}
    \dot V &\le
    -\frac{\lambda_{\min}(\mathbf{Q})}{2}\|\tilde{\bar e}\|^2
    + \frac{2\kappa^2}{\lambda_{\min}(\mathbf{Q})}\|\bar\theta^\ast\|^2.
\end{align*}
Using the bounds $\lambda_{\min}(\mathbf{P}_e)\|\tilde{\bar e}\|^2 \le V(\tilde{\bar e}) \le \lambda_{\max}(\mathbf{P}_e)\|\tilde{\bar e}\|^2,$
then $\|\tilde{\bar e}\|^2\ge V(\tilde{\bar e})/\lambda_{\max}(\mathbf{P}_e)$. Hence $\dot V \le -\alpha V + \beta \|\bar\theta^\ast\|^2,$
with $\alpha \coloneqq \lambda_{\min}(\mathbf{Q})/[2\lambda_{\max}(\mathbf{P}_e)]$ and $\beta \coloneqq 2\kappa^2/\lambda_{\min}(\mathbf{Q})$.

The linear differential inequalities imply
the ISS estimate
\begin{equation*}
    \|\tilde{\bar e}(t)\|
    \le
    c_1 e^{-c_2 t}\|\tilde{\bar e}(0)\|
    + c_3 \sup_{0\le \tau\le t}\|\bar\theta^\ast(\tau)\|,
\end{equation*}
where the constants are $c_1 = \sqrt{\lambda_{\max}(\mathbf{P}_e)/\lambda_{\min}(\mathbf{P}_e)}$, $c_2 = \alpha/2$, and $c_3 = \sqrt{\beta/\alpha\lambda_{\min}(\mathbf{P}_e)}$.
Since $\bar e_x = \tilde{\bar e} + \bar e^\star$ and
$\bar e^\star$ is constant, this yields a bound of the form
\eqref{eq:ISS-bound} with new constants $c_1,c_2,c_3>0$.

In the disturbance--free case $\bar\theta^\ast\equiv 0$, we have
$\dot V \le -\alpha V$ and therefore $\tilde{\bar e}(t)\to 0$ exponentially,
so $\bar e_x(t)\to \bar e^\star$. For constant $\bar x_0$, the integral action ensures that $\bar e^\star = 0$, yielding $\bar e_x(t)\to 0$. 
Finally, Proposition~\ref{prop:e-zero-implies-consensus} implies that
the network achieves cooperative consensus tracking of the setpoint.
\end{proof}

\begin{proof}[Proof of Corollary~\ref{cor:output-tracking}]
From $\bar y = \mathbf{C}\bar x$ with $\mathbf{C}=\diag\{C_1,\dots,C_m\}$, we obtain
\begin{equation*}
    \bar e_y
    = (\mathbb{L}\otimes I_{n_y})\mathbf{C}\bar x
    - (\mathbb{A}_0\otimes I_{n_y})\bar y_0.
\end{equation*}
Since $\mathbf{C}$ is a bounded linear operator, there exists a constant $\kappa_C>0$ such that
$\|\bar e_y\| \le \kappa_C \|\bar e_x\|.$
The result then follows directly from the ISS bound in Theorem~\ref{thm:ISS-error}.
\end{proof}

\end{document}